\documentclass[10pt, twocolumn]{IEEEtran}
\usepackage[a4paper]{geometry}
\geometry{top=18mm, bottom=40mm, left=13mm, right=13mm}

\usepackage{paralist}

\usepackage{amsmath}
\usepackage{amsthm}
\usepackage{amssymb}
\usepackage{verbatim}
\usepackage{algorithm}
\usepackage{algorithmic}
\usepackage{graphicx}
\usepackage{epstopdf}  % for use of .eps format
\usepackage{paralist}
\usepackage{color}
\usepackage{url}
\usepackage{subfig}

\newtheorem{lemma}{Lemma}
\newtheorem{theorem}{Theorem}
\newtheorem{definition}{Definition}

\newtheorem{example}{Example}
\floatname{algorithm}{Algorithm}

%\allowdisplaybreaks[4]

\begin{document}
%
% paper title
% can use linebreaks \\ within to get better formatting as desired
\title{\huge Joint Rate Adaptation and Medium Access in Wireless LANs: a Non-cooperative Game Theoretic Perspective}
\newcounter{one}
\setcounter{one}{1}
\newcounter{two}
\setcounter{two}{2}

\author{\qquad Lin Chen$^\fnsymbol{one}$ \qquad\qquad\qquad\qquad\qquad\qquad\qquad Athanasios V. Vasilakos$^\fnsymbol{two}$ \\
\parbox{0.45\textwidth}{\centering $^\fnsymbol{one}$Laboratoire de Recherche en Informatique (LRI) \\
University of Paris-Sud, 91405 Orsay, France \\
chen@lri.fr}
\hfill
\parbox{0.52\textwidth}{\centering $^\fnsymbol{two}$Dept. Computer and Telecommunications Engineering \\
Univ. Western Macedonia, Greece \\
vasilako@ath.forthnet.gr}
}

% "Cheat" some more by decreasing the spacing after floating figures
% and around displayed equations (this has to be after \maketitle
% which restores some defaults)
%\addtolength{\floatsep}{-\baselineskip}
%\addtolength{\dblfloatsep}{-\baselineskip}
%\addtolength{\textfloatsep}{-\baselineskip}
%\addtolength{\dbltextfloatsep}{-\baselineskip}
\addtolength{\abovedisplayskip}{-1ex}
\addtolength{\belowdisplayskip}{-1ex}
\addtolength{\abovedisplayshortskip}{-0.75ex}
\addtolength{\belowdisplayshortskip}{-0.75ex}

% for Computer Society papers, we must declare the abstract and index terms
% PRIOR to the title within the \IEEEcompsoctitleabstractindextext IEEEtran
% command as these need to go into the title area created by \maketitle.

% make the title area
\maketitle

\begin{abstract}
Wireless local area networks (WLANs) based on IEEE 802.11 standards are becoming ubiquitous today and typically support multiple data rates. In such multi-rate WLANs, distributed medium access and rate adaptation are two key elements to achieve efficient radio resource utilization, especially in non-cooperative environments. In this paper, we present an analytical study on the non-cooperative multi-rate WLANs composed of selfish users jointly adjusting their data rate and contention window size at the medium access level to maximize their own throughput, irrespective of the impact of their selfish behaviors on overall system performance.
Specifically, we develop an adapted Tit-For-Tat (TFT) strategy to guide the system to an efficient equilibrium in non-cooperative environments. We model the interactions among selfish users under the adapted TFT framework as a non-cooperative joint medium access and rate adaptation game. A systematic analysis is conducted on the structural properties of the game to provide insights on the interaction between rate adaptation and 802.11 medium access control in a competitive setting. We show that the game has multiple equilibria, which, after the equilibrium refinement process that we develop, reduce to a unique efficient equilibrium. We further develop a distributed algorithm to achieve this equilibrium and demonstrate that the equilibrium achieves the performance very close to the system optimum in a social perspective.
\end{abstract}

\section{Introduction}
\label{sec:intro}

\subsection{General Context}

Wireless local area networks (WLANs) based on IEEE 802.11 standards are becoming ubiquitous today and typically support multiple data transmission rates, e.g., four in 802.11b, eight in 802.11a/g and 20 in 802.11n. In such multi-rate WLANs, distributed medium access and rate adaptation are two key elements to achieve efficient radio resource utilization. In this paper, we focus on the competitive scenario where selfish users jointly adjust their data rate and contention window (CW) size at the medium access level to maximize their own throughput, irrespective of the impact of their selfish behaviors on overall system performance. Our focus on the non-cooperative scenario is motivated by the following two observations:

Today, network adapters are highly programmable, making a selfish user extremely easy to tamper wireless interface to maximize its own benefit, especially in environments without central administration such as public hot-spots or WLANs operated by different enterprisers;

WLANs are by nature distributed environments lack of coordination and sophisticated network feedback. In such environments, non-cooperative selfish behaviors maximizing local utilities are much more robust and scalable than any centralized cooperative control, which is very expensive or even impossible to implement.

\subsection{Related Work}

The existing literature on the non-cooperative medium access and rate control in WLANs can be naturally categorized in the following two domains:

\textit{Non-cooperative medium access:} Recent studies \cite{Cagalj05}, \cite{Konorski06} have studied some undesirable effects of the selfish behaviors at the medium access level on the system performance, with the main result being that such selfish behaviors, even of a small number of users, can paralyze the entire network. Hence a number of mechanisms, based on punishment schemes \cite{Cagalj05}, \cite{Konorski06}, disutility function \cite{Jin07} and Tit-For-Tat concept \cite{chen07a}, have been proposed to thwart the selfish behaviors and drive the network to an efficient equilibrium. Another relevant research thrust \cite{Lee07}, \cite{Chen10}, \cite{Chen09}, consists of applying game theory to analyze and reverse-engineer the 802.11 MAC protocol and providing insights on designing more efficient wireless MAC protocols.

%Among the representative works, \cite{Cagalj05} shows that the existence of small population of selfish nodes can paralyze the entire network and proposes a penalizing scheme to force the selfish nodes to obey the rules. \cite{chen07a} applies the TFT strategy to guide the network to an efficient equilibrium. \cite{Jin07} studies the distributed contention window control for selfish users in 802.11 WLANs under a non-cooperative game theoretical framework by introducing the concept of disutility representing the inconvenience that a selfish user causes to others. \cite{Konorski06} shows that the backoff misbehavior leads to a significant unfair share of bandwidth between the cheater and the well-behaved users and proposes a punishing mechanism to drive the network operating point to a Pareto-optimum.

\textit{Non-cooperative rate adaptation:}
Since 802.11 standards do not specify any rate adaptation algorithm, extensive research efforts have been investigated in this field, resulting a number of rate adaptation mechanisms, ranging from the seminal work on automatic rate fallback (ARF) approach \cite{Kamerman97} to later developed  algorithms to improve and replace ARF, e.g., \cite{Lacage04}, \cite{Choi06}, \cite{Singh07}. The key idea is to track channel quality based on packet losses and adapt data rate accordingly.
More recently in non-cooperative setting, related works \cite{Heusse03}, \cite{Tan04}, \cite{Tan05}, \cite{Altman05} have discussed some undesirable effects of the 802.11 MAC layer distributed coordination function (DCF) on the overall network performance when multiple competing nodes use different data rates. Specifically, \cite{Heusse03} demonstrates the so-called \textit{performance anomaly} in 802.11 WLANs such that when competing nodes transmit at different data rates, the aggregate throughput is dominated by the lowest transmission rate. Tan et al. \cite{Tan04} investigate the \textit{time-based fairness}, in which each node is given an equal amount of channel time, and the \textit{throughput-based fairness}, in which each node achieves equal throughputs and show via experiments that time-based fairness can improve performance in multi-rate WLANs. Rate adaptation games in WLANs without and with loss distinction are studied in \cite{Radunovic10}. Pricing mechanisms are proposed in \cite{chen07b} to increase the efficiency of the equilibrium of the non-cooperative rate control game.

\subsection{Paper Overview and Contributions}

Despite a rich body of existing work, the vast majority of them addresses the non-cooperative behaviors at medium access and rate adaptation levels separately. The following natural but crucial questions arise: What is the situation if users can jointly configure their medium access and rate level strategies selfishly? How the two levels interact with each other? How to orient the system towards a fair and efficient equilibrium in this two-dimensional competitive scenario?

Motivated by the above observation, this paper provides a systematic analysis on the non-cooperative joint medium access and rate adaptation game in 802.11 WLANs under the adapted Tit-For-Tat (TFT) framework, a natural strategy in non-cooperative environments widely applied in many applications such as peer-to-peer networks \cite{Qiu04}. We adapt the TFT strategy at both medium access and rate adaptation level (the authors of \cite{chen07a} study the TFT strategy only at medium access level for homogeneous users where the rate adaptation is not considered). The adapted TFT strategy does not require any coordination or incentive mechanisms which may be expensive or even impossible to implement in distributed environments as 802.11 WLANs and inherently ensures user fairness at both levels.

Aiming to provide in-depth understanding of the interaction between rate adaptation and medium access in competitive setting from an analytical perspective, we analyze the structural properties of the formulated game. Specifically, we show that the game has multiple equilibria, which, after the equilibrium refinement process that we develop, reduce to a unique efficient equilibrium. We further develop a distributed algorithm to achieve this equilibrium and demonstrate that the equilibrium achieves the performance very close to the system optimum in a social perspective.

From a user-centric perspective, our work also provides an analytical framework that can stabilize the network in a distributed fashion around a system equilibrium with fairness (or service differentiation) and high efficiency.

\section{System Model}
\label{sec:model}

We consider a multi-rate 802.11 WLAN of a set ${\cal N}=\{1, \cdots, n\}$ of selfish (but not malicious) users, each of whom tries to maximize his own utility (e.g., throughput) by conducting the following two-dimensional selfish strategies:

\textit{Medium access:} Each user $i$ selfishly modifies the distributed random back-off medium access mechanism imposed by 802.11 DCF so as to maximize its utility. More specifically, we focus on the modification of the key parameter in DCF, the contention window (CW) size, denoted by $W_i$.
%Today, network adapters are becoming more and more programmable, which makes a selfish user extremely easy to tamper the wireless interface to modify the protocol parameters such as the CW size to maximize its own benefit.

\textit{Rate adaptation:} Each user $i$ selfishly configures its data rate $R_i$, given the channel condition and other parameters, so as to maximize its utility. Note that IEEE 802.11 standards support multiple transmission rates and leave users to implement their rate adaptation algorithms, which essentially consists of seeking a tradeoff between the transmission rate and the packet loss due to channel errors.

\section{Markov Model of a Multi-rate 802.11 WLAN with Selfish Users}

% In this this paper, we provide a game theoretic analysis on the above two-dimensional selfish behaviors, in which each user jointly optimizes its medium access parameter (CW) and data rate in regard of maximizing its own utility. We propose a TFT-based joint medium access and rate adaptation framework to allow the network to operate at an efficient system equilibrium in selfish environment.

\subsection{The Markov Model}

To model the users' selfish behaviors and to capture their impact on the network performance, we develop a Markov model on the exponential back-off process by taking into account the users' selfish behaviors. Our model follows the ideas of the seminal work of Bianchi~\cite{Bianchi00}, and is based on the same assumptions:
\begin{compactitem}
\item \textit{Traffic saturation:} The network is saturated such that all users have packets to transmit at any time\footnote{This assumption is justified in our context as the network is saturated when selfish users try to maximize their share of bandwidth by depriving other users of their share.};
\item \textit{Decoupling of collision probability:} The collision probability is decoupled from the back-off stage;
\end{compactitem}

\begin{figure}[htbp]
\centering
\includegraphics[width=7cm]{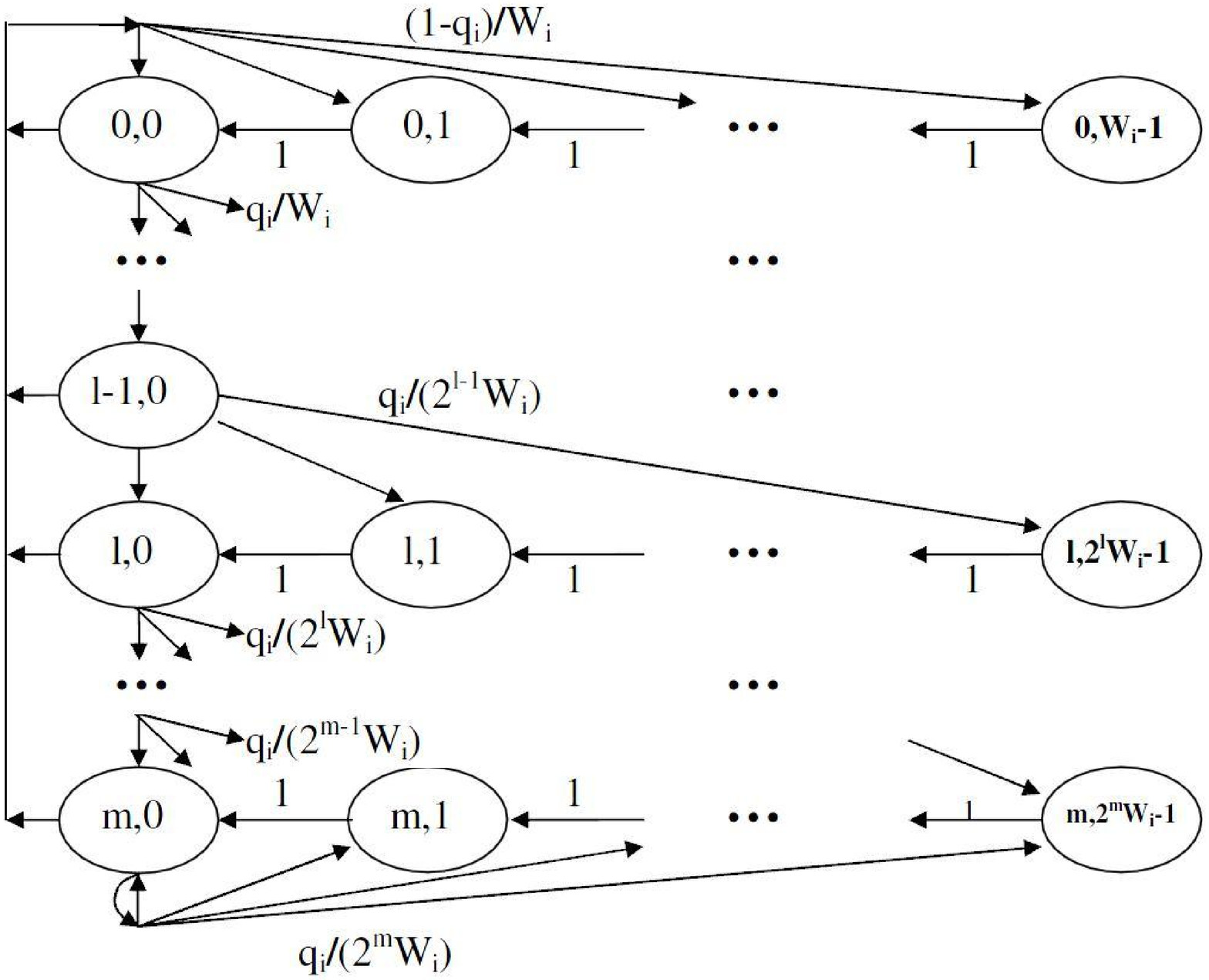}
\caption{The Markov model of user $i$}
\label{fig:markov_chain}
\end{figure}

The Markov model of user $i$ selfishly operating on $W_i$ and $R_i$ is depicted in Figure~\ref{fig:markov_chain}. Each state in the chain is denoted by a couple $(l, k)$ with $l=0,\cdots,m$ being the back-off stage and $k$ the back-off counter ($k\le 2^l-1$). The CW value doubles after each consecutive collision until when the maximum back-off stage $m$ is achieved. State transitions happen at the beginning of each \textit{virtual slot}, as defined in~\cite{Bianchi00}.
%A transition between states may be either a simple decrement of the back-off counter (from a state with k = 0 or an assignment of a new random value after a frame transmission (in a state (i, 0)). All such transitions occur at the boundaries of slots, as defined in [21]; a slot's duration is variable, and may range from a rather short (for a slot with no transmissions) to an extended period that contains an entire frame transmission and its acknowledgment.
Despite the similarity between our model and the Bianchi's model on standard 802.11 DCF, there are
two key differences that make the sequential analysis more challenging:

\textit{Heterogeneity:} The selfishness at both medium access and rate adaptation level leads to heterogeneous Markov models among users. Hence the resulting stationary state solution is heterogeneous among users, rather than being identical for all users as in~\cite{Bianchi00};

\textit{Impact of data rate:} Different from~\cite{Bianchi00}, the stationary state of the Markov model is jointly determined by the medium access parameters and data rate. As shown later in the paper, this cross-layer interdependence brings non-trivial difficulties in the analysis of the system equilibrium and requires an original study that cannot rely on any existing results.

Denote $\tau_i$ the transmission probability of node $i$ in a random slot and $p_i$ the conditioned collision probability of $i$ (the collision probability when $i$ transmits a packet in a random slot), recall the decoupling and saturation assumptions, we can establish the state transition probabilities as follows:
\begin{IEEEeqnarray*}{lll}
%\begin{cases}
P\{l, k | l, k+1\} = 1, &\  0\le k\le 2^lW_i-2, &\ 0\le l\le m, \\
P\{0, k | l, 0\} = \frac{1-q_i}{W_i}, &\ 0\le k\le W_i-1, &\ 0\le l\le m, \\
P\{l, k | l-1, 0\} = \frac{q_i}{2^lW_i}, &\ 0\le k\le 2^lW_i-1, & 1\le l\le m, \\
P\{m, k | m, 0\} = \frac{q_i}{2^mW_i}, &\ 0\le k\le 2^mW_i-1,&
%\end{cases}
\end{IEEEeqnarray*}
where $P\{s_1,b_1|s_2,b_2\}$ denotes the transition probability from state $(s_1, b_1)$ to $(s_2, b_2)$, $q_i$ denotes the probability a packet arrives at destination without error, detailed as follows
\begin{eqnarray}
q_i=1-(1-p_i)[1-e_i(R_i)],
\label{eq:q}
\end{eqnarray}
where $e_i(R_i)$ is the packet error rate such that a packet is lost due to channel error when user $i$ operates on data rate $R_i$. The four formulas describe the transition probabilities corresponding to the following scenarios under DCF:
\begin{compactitem}
\item $i$ decrements its CW at the beginning of each slot;
\item Once $i$ finish a successful transmission, the next packet's back-off timer is selected uniformly from the range $[0,W_i-1]$, corresponding to the first back-off stage;
\item In case of a transmission failure, the back-off stage is incremented and the back-off timer is selected uniformly from the range $[0,2^lW_i-1]$;
\item Once in stage $m$, the back-off stage is no more incremented.
\end{compactitem}

\subsection{Analysis of Stationary State Solution}

To derive the stationary state of $i$, let $b_{l,k}\triangleq \lim_{t\rightarrow\infty} P\{s(t)=l, b(t)=k\}$ denote the stationary probability distribution, we can derive the following equations to express $b_{l,k}$ by $b_{0,0}$:
\begin{IEEEeqnarray*}{lll}
b_{l,0}=q_ib_{0,0},&\ 0\le l\le m-1, &\\
b_{m,0}=\frac{q_i}{1-q_i}b_{0,0},&\ &\\
b_{l,k}=\frac{2^lW_i-k}{2^lW_i}b_{l,0}, &\ 1\le k\le 2^lW_i-1, &\ 0\le l\le m .
\end{IEEEeqnarray*}

Apply $\sum_{l=0}^m \sum_{k=0}^{2^lW_i-1}b_{l,k}=1$, we can solve $b_{0,0}$ as
\begin{eqnarray*}
b_{0,0}={2(1-2q_i)(1-q_i) \over (1-2q_i)(W_i+1)+q_iW_i(1-(2q_i)^m)}.
\end{eqnarray*}

Thus, the probability $\tau_i$ that user $i$ transmits in a random slot can be expressed as
\begin{eqnarray}
\tau_i=\sum_{l=0}^m b_{l,0}=\frac{2}{W_i+1+q_iW_i\sum_{l=0}^{m-1}(2q)^l}.
\label{eq:tau}
\end{eqnarray}

On the other hand, we have
\begin{eqnarray}
p_i=1-\prod_{j\in{\cal N},j\ne i}(1-\tau_j).
\label{eq:p}
\end{eqnarray}

Combine~\eqref{eq:q}, \eqref{eq:tau} and~\eqref{eq:p}, we obtain $3n$ equations with $3n$ unknowns. In Theorem~\ref{theorem:uniqueness_sol_markov} that follows, we prove that given any data rate profile $(R_i)_{i\in{\cal N}}$, the system characterized by the $3n$ equations admits a unique solution under the mild condition $W_i>3, \forall i\in{\cal N}$. These equations can then be solved numerically.

Before presenting Theorem~\ref{theorem:uniqueness_sol_markov}, we define the following function:
\begin{eqnarray}
\Gamma_i(x)\triangleq \frac{2}{W_i+1+xW_i\sum_{l=0}^{m-1}(2x)^l}.
\label{eq:gamma}
\end{eqnarray}
The following properties hold straightforwardly:
\begin{compactitem}
\item $\tau_i=\Gamma_i(q_i)$;
\item From~\eqref{eq:q}, \eqref{eq:tau} and~\eqref{eq:p}, it holds that
    \begin{eqnarray}
    \frac{(1-q_i)(1-\Gamma_i(q_i))}{1-e_i(R_i)}=\prod_{j\in{\cal N}}(1-\tau_j);
    \label{eq:markov_sol_eq}
    \end{eqnarray}
\item By checking the derivative, it can be shown that $(1-x)[1-\Gamma_i(x)]$ is monotonously decreasing in $x$ if $W_i>3$.
\end{compactitem}

\begin{theorem}%[Existence and Uniqueness of Stationary State Solution in the Markov Model]
Under the condition $W_i>3, \forall i\in{\cal N}$, the Markov model characterizing a 802.11 WLAN with selfish users admits a unique stationary-state solution for any data rate profile $(R_i)_{i\in{\cal N}}$.
\label{theorem:uniqueness_sol_markov}
\end{theorem}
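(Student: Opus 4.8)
The plan is to collapse the $3n$-equation system into a single scalar fixed-point equation. The crucial observation is that the right-hand side of \eqref{eq:markov_sol_eq}, namely $\prod_{j\in{\cal N}}(1-\tau_j)$, does not depend on $i$. I would therefore introduce the common quantity $P\triangleq\prod_{j\in{\cal N}}(1-\tau_j)$ and, writing $g_i(x)\triangleq(1-x)[1-\Gamma_i(x)]$, recast \eqref{eq:markov_sol_eq} as
\begin{eqnarray*}
g_i(q_i)=P\,[1-e_i(R_i)],\qquad \forall i\in{\cal N}.
\end{eqnarray*}
Once $P$ is fixed this is a decoupled scalar equation for each $q_i$, so solving the whole system reduces to determining the single unknown $P$ self-consistently.

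Next I would show that for each fixed admissible $P$ the value $q_i$ is uniquely determined. By the third property listed before the theorem, $g_i$ is strictly decreasing under $W_i>3$; since $g_i(0)=\tfrac{W_i-1}{W_i+1}$ and $g_i(1)=0$, it is a continuous decreasing bijection of $[0,1]$ onto $[0,\tfrac{W_i-1}{W_i+1}]$. Hence $q_i(P)=g_i^{-1}\!\big(P[1-e_i(R_i)]\big)$ is well defined and continuous, and $\tau_i(P)=\Gamma_i(q_i(P))$ follows from \eqref{eq:tau}. I would then propagate monotonicity through the composition: as $P$ increases, $P[1-e_i(R_i)]$ increases, so $q_i(P)$ decreases (inverse of a decreasing map); because $\Gamma_i$ is itself decreasing in its argument, $\tau_i(P)$ increases, each factor $1-\tau_i(P)$ decreases, and thus $\prod_{j}(1-\tau_j(P))$ is strictly decreasing in $P$.

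Uniqueness then follows at once: the map $P\mapsto P-\prod_{j\in{\cal N}}(1-\tau_j(P))$ is the sum of the strictly increasing term $P$ and the strictly increasing term $-\prod_j(1-\tau_j(P))$, hence strictly increasing, so it has at most one zero, and any such zero yields through $q_i(P)$, $\tau_i(P)$ and \eqref{eq:q} a unique solution of the full system. For existence I would invoke the intermediate value theorem on this continuous map: as $P\to 0^{+}$ every $q_i\to 1$, so $\tau_i\to\Gamma_i(1)>0$ and the map is strictly negative, while at the upper end of the admissible range of $P$ (governed by $\min_i \tfrac{W_i-1}{(W_i+1)[1-e_i(R_i)]}$) I would bound $\prod_j(1-\tau_j(P))$ above by $1-\tau_{i^\star}$ for the binding index $i^\star$ and compare it with $P$ to force a strictly positive value. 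The main obstacle I anticipate is exactly this endpoint step: one must identify the admissible domain of $P$ precisely, verify that $q_i(P)$ remains in the physically meaningful interval $[e_i(R_i),1]$, and confirm that the sign change is strict (using $n\ge 2$ and $\tau_j>0$). The monotonicity chain is then the easy engine, once the correct scalar variable $P$ has been isolated.
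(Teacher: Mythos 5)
Your proposal is correct, and it takes a genuinely different route from the paper, at least for existence. The paper proves existence by applying Brouwer's fixed point theorem to the vector map $\tau_i=T_i(\tau_{-i})$ on the box $\prod_i[\underline{\tau}_i,\overline{\tau}_i]$, and proves uniqueness separately by contradiction: assuming two solutions with $q_i^1<q_i^2$, it uses the monotonicity of $(1-x)[1-\Gamma_i(x)]$ together with \eqref{eq:markov_sol_eq} to force $q_j^1<q_j^2$ and hence $\tau_j^1>\tau_j^2$ for all $j$, contradicting $\prod_j(1-\tau_j^1)>\prod_j(1-\tau_j^2)$. Your uniqueness argument is really the same engine — the chain ``$g_i$ decreasing, $\Gamma_i$ decreasing, hence $\prod_j(1-\tau_j(P))$ decreasing in $P$'' is exactly what drives the paper's contradiction — but you reorganize it as strict monotonicity of the scalar map $P\mapsto P-\prod_j(1-\tau_j(P))$, which is cleaner and gives at most one zero immediately. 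Where you genuinely diverge is existence: by collapsing the system to one unknown $P$ you replace Brouwer with the intermediate value theorem, which is more elementary and also more constructive (it suggests solving for $P$ by bisection rather than solving $3n$ coupled equations). The price is the bookkeeping you already identified: you must pin down the admissible interval for $P$ (its right endpoint is $\min_i \frac{W_i-1}{(W_i+1)[1-e_i(R_i)]}$, possibly capped at $1$), check the sign at that endpoint via $\prod_j(1-\tau_j)\le 1-\tau_{i^\star}=\big(1-e_{i^\star}(R_{i^\star})\big)P^{\max}\le P^{\max}$ with strictness for $n\ge 2$, and verify that a zero of the scalar equation actually reconstructs a solution of \eqref{eq:q}--\eqref{eq:p} (which it does, since $1-q_i=(1-e_i(R_i))P/(1-\tau_i)$ recovers $p_i=1-\prod_{j\ne i}(1-\tau_j)$ when $\tau_i<1$). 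These are routine but should be written out; none of them is a gap in the idea.
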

%The proof of the existence of a stationary-state solution relies on Brouwer fixed point theorem, and the uniqueness is proven by contradiction. Due to space limit, we only provide outlines of some proofs in the paper and refer readers to xx for detailed proofs of all lemmas and theorems.

\begin{proof}
We first show that the Markov model has a stationary state solution. By injecting $p_i$ into $q_i$, we obtain
\begin{eqnarray*}
q_i=1-(1-e_i(R_i))\prod_{j\in{\cal N},j\ne i}(1-\tau_j).
\end{eqnarray*}
Recall \eqref{eq:tau} and denote $\tau_{-i}\triangleq \{\tau_j, j\in{\cal N}, j\ne i\}$, $\tau_i$ can be regarded as a function of $\tau_{-i}$. To prove the existence of a stationary state solution in the Markov chain model, it suffices to show that the mapping from $\tau_{-i}$ to $\tau_i$, described as follows, has a fixed point:
\begin{eqnarray*}
\tau_i=T_i(\tau_{-i})\triangleq \frac{2}{W_i+1+q_iW_i\sum_{l=0}^{m-1}(2q_i)^l},
\end{eqnarray*}
where $q_i=1-(1-e_i(R_i))\prod_{j\in{\cal N},j\ne i}(1-\tau_j)$.

Noticing that $0\le \tau_j\le 1$ holds for any $j\in{\cal N}$ and that $0\le e_i(R_i)\le 1$, it holds that
\begin{eqnarray*}
T_i(\tau_{-i})&\le& \frac{2}{\displaystyle W_i+1+e_i(R_i)W_i\sum_{l=0}^{m-1}\left[2e_i(R_i)\right]^l}< \frac{2}{W_i+1}<1, \\
T_i(\tau_{-i})&\ge& \frac{2}{\displaystyle W_i+1+W_i\sum_{l=0}^{m-1}2^l}>0.
\end{eqnarray*}
It follows from Brouwer fixed point theorem~\cite{Bertsekas07} that there exists a fixed point to the mapping $(T_i)_{i\in{\cal N}}$.

We then proceed to show the uniqueness of the stationary state solution. Assume, by contradiction, that there exists two distinct stationary points $\mathbf{s^1}\triangleq (\tau_i^1,p_i^1,q_i^1, i\in{\cal N})$ and $\mathbf{s^1}\triangleq (\tau_i^2,p_i^2,q_i^2, i\in{\cal N})$, their must exists $i$ such that $q_i^1\ne q_i^2$. Without loss of generality, assume that $q_i^1<q_i^2$.

Note that~\eqref{eq:markov_sol_eq} holds at both solutions and that $(1-x)[1-\Gamma_i(x)]$ is monotonously decreasing in $x$ if $W_i>3$, it follows from the assumption $q_i^1<q_i^2$ that $\prod_{j\in{\cal N}}(1-\tau_j^1)>\prod_{j\in{\cal N}}(1-\tau_j^2)$, which, by applying~\eqref{eq:markov_sol_eq} to other users $j$, leads to $q_j^1<q_j^2$. Noticing that $\tau_j=\Gamma_j(q_j)$ is monotonously decreasing in $q_j$, we have $\tau_j^1>\tau_j^2$ for any other user $j$, which, combined with the assumption $q_i^1<q_i^2$ (thus $\tau_i^1>\tau_i^2$), clearly contradicts with $\prod_{j\in{\cal N}}(1-\tau_j^1)>\prod_{j\in{\cal N}}(1-\tau_j^2)$. We thus complete the proof of the uniqueness of the stationary state solution of the Markov model and also the theorem.
\end{proof}

\section{Adapted Tit-For-Tat Strategy}

First introduced by Anatol Rapoport in Robert Axelrod's two tournaments, Tit-For-Tat (TFT) strategy~\cite{Axelrod84} is regarded as one of the best strategies in non-cooperative environments and is the root of an ever
growing amount of other successful strategies. The core idea of TFT is to start with cooperation and continue to cooperate if the opponents cooperate. The philosophy behind is that in selfish environment each rational player is expected to take more aggressive actions if and only if any other player acts more aggressively.

In our context, we propose the following adapted version of the TFT strategy at two levels.

At the medium access level, the TFT strategy is adapted to ensure that all users operate on the same CW values in order to guarantee the medium access fairness among users. More specifically, each user $i$ measures the CW values of other users during a period of time\footnote{How to observe average CW values during a period of time in a saturated WLAN is addressed in~\cite{Kyasanur05}.}. If it detects another user $j$ operating on a smaller CW value (i.e., $W_j<W_i$), then it reacts by setting $W_i=\min_{j\in{\cal N}} W_j$, otherwise it keeps operating on the previous CW value.

At the rate adaptation level, the TFT strategy is adapted to ensure the fairness among users in terms of channel occupation time, i.e., all users occupy the channel for the same amount of time when transmitting a packet. More specifically, each user $i$ measures the channel occupation time of other users during a period of time. If it detects another user $j$ occupying a longer period of time for transmission (i.e., $T_j>T_i$), then it reacts by setting $T_i=\max_{j\in{\cal N}} T_j$, otherwise it sticks to the previous strategy. The rate adaptation level TFT strategy is essentially motivated by the well-known performance anormaly in 802.11 WLANs where a selfish user tends to transmit at lower data rate so as to enjoy better transmission quality while it penalizes other users since a lower data rate increases its channel occupation time.

The above adapted TFT strategy has following desirable properties which makes it especially suited in WLANs:
\begin{compactitem}
\item The decision is made solely on local measurement.
\item It is simple to implement and only the last measurement needs to be stored.
\item It is especially adapted in wireless networks in that the broadcast nature makes the observation feasible in promiscuous mode.
\item It ensures the fairness among selfish users.
\end{compactitem}

%By applying it all players converge to the same CW value, otherwise the players with greater CW values will decrease them according to their measurement so as not to be disfavored. Thus within finite number of stages all players will operate on the same CW value which yields the same utility and throughput.

In practice, taking into account the various factors influencing the measurement in wireless environment, a more tolerant version of the TFT strategy called Generous TFT (GTFT) can be applied by integrating a tolerance marge.

\section{Joint medium access and Rate Adaptation Game Formulation}

We focus on the competitive scenario that each user selfishly attempts to optimize its performance by jointly selecting appropriate strategy (CW value and data rate) under the adapted TFT framework depicted in the previously section. This setting gives rise to a non-cooperative joint medium access and rate adaptation game. In this section, we first specify the utility function of each user and then give the formal definition of the game.

\subsection{User Utility: Throughput Analysis}

We study a natural utility function for selfish users, the \textit{effective throughput}, defined as the quantity of bits per unit time successfully arriving at the destination. In this subsection, we use the stationary state solution of the developed Markov model to derive the expected effective throughput of each user $i$. We set out by computing the average virtual slot duration, denoted as $T_{slot}$. As defined in [21], a virtual slot may correspond to a slot where the channel is idle, to a successful transmission, or to a collision. Specifically, let $\sigma$ denote the duration of an empty slot, $T_s^i$ denote the duration of a successful transmission of user $i$, and $T_u^i$ the duration of an unsuccessful transmission of user $i$ due to either collision or channel error. We can mathematically develop $T_{slot}$ as
\begin{multline}
T_{slot} = \prod_{j\in{\cal N}}(1-\tau_j)\sigma + \sum_{i\in{\cal N}}\tau_i\prod_{j\in{\cal N},j\ne i}(1-\tau_j)(1-e_i(R_i))T_s^i
+ \\ \left[\sum_{i\in{\cal N}}\tau_i\!\!\!\prod_{j\in{\cal N},j\ne i}\!\!\!\!\!(1-\tau_j)+\sum_{i\in{\cal N}}\tau_i\left(1-\!\!\!\prod_{j\in{\cal N},j\ne i}\!\!\!\!\!(1-\tau_j)\right)\right]T_u^i,
\label{eq:effective_th}
\end{multline}
where the three terms represent, respectively, the following possible scenarios:
\begin{compactitem}
\item with probability $\prod_{j\in{\cal N}}(1-\tau_j)$, the system experiences an empty slot;
\item with probability $\tau_i\prod_{j\in{\cal N},j\ne i}(1-\tau_j)(1-e_i(R_i))$, $i$ has a successful transmission;
\item with probability $\tau_i\prod_{j\in{\cal N},j\ne i}(1-\tau_j)$, $i$ experiences a collision, with probability $\tau_i\prod_{j\in{\cal N},j\ne i}(1-\tau_j)e_i(R_i)$, $i$ experiences a transmission failure due to channel error.
\end{compactitem}

In the base-line 802.11 MAC protocol without RTS/CTS\footnote{Although we focus in this work on the base-line model, our method and analysis can be applied in the model with RTS/CTS dialogue, which we leave for future study.}, the TFT framework ensures that the packet duration is identical among users. Denote $T$ the time for a packet transmission, denote ACK, DIFS and SIFS the time to transmit the ACK DIFS and SIFS frames, respectively. By neglecting the propagation delay, we can compute $T_s^i$ and $T_u^i$ as follows
\begin{eqnarray*}
T_s^i &=& T+SIFS+ACK+DIFS\simeq T, \\
T_u^i &=& T+SIFS \simeq T,
\end{eqnarray*}
where the approximation is due to that the control frame size is order of magnitude smaller than that of a data frame.

Let $\rho_i\triangleq 1-\tau_i$, noting that $\sigma\ll T$, we have
\begin{eqnarray*}
T_{slot}\simeq (1-\prod_{j\in{\cal N}}\rho_j)T.
\end{eqnarray*}

Note that (1) the probability that user $i$ performs a transmission without collision is $(1-\rho_i)\prod_{j\in{\cal N}, j\ne i}\rho_j$; (2) the transmitted packet, whose size is $R_iT$, is corrupted due to channel error with probability $e_i(R_i)$, after some algebraic operations, the effective throughput can be written as
\begin{eqnarray*}
S_i=\frac{(1-\rho_i)\prod_{j\in{\cal N}, j\ne i}\rho_j(1-e_i(R_i))R_i}{1-\prod_{j\in{\cal N}}\rho_j}.
\end{eqnarray*}

To conclude this subsection, we take an analytical look at the packet error rate function $e_i(R_i)$. Intuitively, under the adapted TFT framework, each user should strike a balance between sending more bits in the packet transmission time by choosing a higher data rate but at the price of increasing the transmission error and operating at a lower data rate with less transmission error. Mathematically, such tradeoff at the rate adaptation level is modeled as follows. Assuming perfect error detection and no error correction, we can express $e_i(r_i)$ as $e_i(R_i)=(1-P_e(R_i))^{R_iT}$, where $P_e(R_i)$ is the bit error rate (BER). $P_e$ is a function of $E_b/N_0$, the bit-energy-to-noise ratio of the received signal, e.g., $P_e={1\over 2}e^{-E_b/N_0}$ for DPSK. Assuming an additive white Gaussian noise (AWGN) channel, the bit-energy-to-noise ratio of $i$ $\left({E_b \over N_0}\right)_i$ of the received signal is derived from the SNR (Signal-to-Noise Ratio) as follows:
$$\displaystyle \left({E_b \over N_0}\right)_i = SNR {B_t \over C_i} = {h_iP_i \over \sigma^2} {B_t \over R_i}
= {h_iB_t \over \sigma^2} {P_i \over C_i},$$
where $B_t$ is the unspread bandwidth of the signal, $P_i$ is the transmission power of $i$, $h_i$ is the channel gain from $i$ to the receiver and
$\sigma^2$ is the AWGN power at the receiver.

Generally, define $G_i(R_i)\triangleq (1-e_i(R_i))R_i$, we have the following features on $e_i(R_i)$ and $G_i(R_i)$ with practical setting on the packet transmission duration $T$ and the minimal and maximal operational data rate $R_{min,i}$ and $R_{max,i}$.

\begin{lemma}
It holds that:
\begin{compactitem}
\item $e_i(0)\rightarrow 0$, $e_i(\infty)\rightarrow 1$; $e_i(R_i)$ is monotonously increasing, twice derivable and strictly convex in $R_{min,i}\le R_i\le R_{max,i}$;
\item $e_i'(R_i)>0$ and is monotonously increasing in $R_i$;
\item $G_i(0)=0$, $G_i(\infty)\rightarrow 0$; $G_i(R_i)$ is strictly concave in $R_{min,i}\le R_i\le R_{max,i}$.
\end{compactitem}
\label{lemma:channel_error_rate}
\end{lemma}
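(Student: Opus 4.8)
The plan is to reduce every claim to the explicit form of the error function. Writing the probability of error-free reception as $1-e_i(R_i)=(1-P_e(R_i))^{R_iT}$ with $P_e(R_i)=\tfrac12 e^{-\gamma_i/R_i}$, where $\gamma_i\triangleq h_iP_iB_t/\sigma^2>0$ is the rate-independent constant coming from the $E_b/N_0$ expression, each regularity statement becomes a statement about a composition of $\exp$, $\ln$ and powers on $R_i>0$, so twice-differentiability is immediate. The two boundary limits are then read off directly: as $R_i\to 0$ one has $P_e(R_i)\to 0$ and $R_iT\ln(1-P_e(R_i))\to 0$ (since $R_ie^{-\gamma_i/R_i}\to 0$), hence $1-e_i\to 1$ and $e_i(0)\to 0$; as $R_i\to\infty$ one has $P_e\to\tfrac12$ and $R_iT\ln\tfrac12\to-\infty$, hence $1-e_i\to 0$ and $e_i(\infty)\to 1$.

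For monotonicity I would work with the log-transform $\phi_i(R_i)\triangleq\ln(1-e_i(R_i))=R_iT\ln(1-P_e(R_i))$, so that $1-e_i=e^{\phi_i}$, $e_i'=-e^{\phi_i}\phi_i'$ and $e_i''=-e^{\phi_i}\bigl(\phi_i''+(\phi_i')^2\bigr)$. Since $P_e>0$ gives $\ln(1-P_e)<0$, and since $P_e$ is increasing in $R_i$ (because $E_b/N_0=\gamma_i/R_i$ is decreasing), both summands of $\phi_i'=T\ln(1-P_e)-R_iT\,P_e'/(1-P_e)$ are negative; hence $\phi_i'<0$ and $e_i'>0$, which proves that $e_i$ is strictly increasing. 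The second bullet, $e_i'>0$ and $e_i'$ increasing, is then just a restatement of monotonicity together with the convexity established next.

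The crux is strict convexity of $e_i$ on the operational interval, i.e. $\phi_i''+(\phi_i')^2<0$. This cannot hold globally: for very large $R_i$ one has $1-e_i\approx e^{-(T\ln 2)R_i}$, so $e_i$ becomes concave there, which is exactly why the statement is restricted to $[R_{min,i},R_{max,i}]$ under a practical setting. In the operational regime the error rate is small, and to leading order $e_i(R_i)\approx R_iTP_e(R_i)=\tfrac{T}{2}R_ie^{-\gamma_i/R_i}\triangleq g_i(R_i)$; a direct computation gives the clean expression $g_i''(R_i)=\tfrac{T}{2}e^{-\gamma_i/R_i}\gamma_i^2/R_i^3>0$, so the dominant term is strictly convex. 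I would make this rigorous by carrying out the exact second-derivative computation and bounding the correction terms (all controlled by powers of $P_e$ and its derivatives) so that the sign of $\phi_i''+(\phi_i')^2$ stays negative throughout $[R_{min,i},R_{max,i}]$. Delimiting precisely the range in which convexity survives is the main obstacle, since it is the only place where the ``practical setting'' hypothesis is genuinely used.

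Finally, the properties of $G_i(R_i)=(1-e_i(R_i))R_i$ follow painlessly from those of $e_i$. The limits are immediate: $G_i(0)=0$, and $G_i(\infty)\to 0$ because $1-e_i$ decays exponentially while $R_i$ grows only linearly. Differentiating twice gives $G_i''=-2e_i'-R_ie_i''$, and on the operational interval we have just shown $e_i'>0$ and $e_i''>0$, so $G_i''<0$ and $G_i$ is strictly concave, completing the lemma.
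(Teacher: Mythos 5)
The paper offers no proof of this lemma at all --- it is asserted as holding ``with practical setting'' on $T$, $R_{min,i}$ and $R_{max,i}$ --- so there is no argument of the authors' to compare yours against. On its own terms, most of your proposal is sound and in places more careful than the paper: you correctly repair what must be a sign typo in the paper (which writes $e_i(R_i)=(1-P_e(R_i))^{R_iT}$ for the \emph{error} rate, when that expression is the success probability), your limit computations at $0$ and $\infty$ are right, the monotonicity argument via $\phi_i=R_iT\ln(1-P_e)$ with both terms of $\phi_i'$ negative is correct, and the derivation $G_i''=-2e_i'-R_ie_i''<0$ from convexity of $e_i$ is exactly the right reduction for the third bullet. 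Your observation that strict convexity of $e_i$ \emph{cannot} hold for all $R_i$ (since $1-e_i\approx 2^{-TR_i}$ for large $R_i$ forces eventual concavity) is a genuine sharpening that explains why the lemma must be restricted to the operational interval.

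The gap is the one you name yourself: strict convexity of $e_i$ on $[R_{min,i},R_{max,i}]$ --- equivalently $\phi_i''+(\phi_i')^2<0$ there --- is the crux of the lemma (everything in bullets two and three hinges on it), and you only establish it for the leading-order surrogate $R_iTP_e(R_i)$, deferring the bounding of the correction terms and the delimitation of the admissible parameter range. Since the claim is false outside some regime, a proof that does not pin down that regime has not actually proved the stated lemma; ``the error rate is small in the operational regime'' is an assumption smuggled in rather than derived from $T$, $R_{min,i}$, $R_{max,i}$ and $\gamma_i$. To close it you would need an explicit sufficient condition, e.g.\ carrying out the exact computation of $\phi_i''+(\phi_i')^2$ for $P_e=\frac12 e^{-\gamma_i/R_i}$ and exhibiting a threshold on $R_{max,i}T P_e(R_{max,i})$ (or on $\gamma_i/R_{max,i}$) below which the sign is controlled. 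Note also that the second bullet's claim that $e_i'$ is monotonously increasing is itself just convexity restated, so it inherits the same restriction to the operational interval and the same gap; as a global statement it would contradict $e_i\le 1$.
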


\subsection{Game Formulation}

We model the interactions among selfish users as a non-cooperative joint medium access and rate adaptation game, in which each user selfishly maximizes its effective throughput by jointly selecting its CW value and data rate under the adapted TFT framework. The game is formally defined as follows\footnote{Throughout the paper, we use \textit{player} and \textit{user} interchangeably to denote a wireless node in the studied WLAN.}.

\begin{definition}
The non-cooperative joint medium access and rate adaptation game, denoted as $G$, is defined as a triple $G\triangleq ({\cal N}, ({\cal W}_i,{\cal R}_i)_{i\in{\cal N}}, (S_i)_{i\in{\cal N}})$, where ${\cal N}$ is the set of players, $({\cal W}_i,{\cal R}_i)$ is the strategy set of player $i$, where ${\cal W}_i\triangleq \{0,\cdots,\infty\}$ and ${\cal R}_i\triangleq [R_{min,i},R_{max,i}]$, the effective throughput $S_i$ is the utility function of player $i$.
\end{definition}

The solution of the game is characterized by a Nash equilibrium~\cite{Myerson91} (NE, simply noted as equilibrium in the paper), a strategy profile from which no player has incentive to deviate unilaterally.
It can be noted that the challenge in solving $G$ is to solve a two-dimensional and non-decomposable optimization problem for each player. To overcome this difficulty and to get more insight on the structure of resulting equilibrium, we introduce the following two-level hierarchical game model: the lower-level rate adaptation game under fixed CW setting and the higher-level medium access game.

\begin{definition}
Given a fixed CW setting $\mathbf{W}\triangleq (W_i)_{i\in{\cal N}}$ at the medium access level, the non-cooperative rate adaptation game, denoted as $G_R(\mathbf{W})$, is a tuple $({\cal N}, ({\cal R}_i)_{i\in{\cal N}}, (S_i)_{i\in{\cal N}})$, where $\cal N$ the player set, ${\cal R}_i=[R_{min,i},R_{max,i}]$ is the strategy set of player $i$, $S_i$ is the utility function of $i$. Each player $i$
selects its data rate $R_{min,i}\le R_i\le R_{max,i}$ to maximize $S_i$.
\label{def:rate_game}
\end{definition}

\begin{definition}
The non-cooperative medium access game, denoted as $G_M$, is a tuple $({\cal N}, ({\cal W}_i)_{i\in{\cal N}}, (\widehat{S}_i)_{i\in{\cal N}})$, where $\cal N$ the player set, ${\cal W}_i= \{0, \cdots, \infty\}$ is the strategy set of player $i$, $\widehat{S}_i$ is the utility function of $i$, defined as $\widehat{S}_i(W_i, W_{-i})\triangleq S_i(\mathbf{W}, \mathbf{R^*(W)})$, where $\mathbf{R^*(W)}$ denotes the equilibrium of $G_R(\mathbf{W})$ (i.e., the data rate profile at the equilibrium of the rate adaptation game under $\mathbf{W}$). Each player $i$ selects its strategy $W_i\in{\cal W}_i$ to maximize its utility function $\widehat{S}_i$.
\label{def:medium access_game}
\end{definition}

By decomposing $G$, we introduce a two-level hierarchical
architecture which will help us analyze the two-dimensional joint medium access and rate adaptation problem in the non-cooperative setting, as explored in the next two sections.

\section{Equilibrium Analysis of Lower-level Rate Adaptation Game}

In this section, we solve the lower-level non-cooperative rate adaptation game in which each user adapts its data rate to maximize its effective throughput under fixed CW setting. We study the existence and uniqueness of equilibrium and explore some structural properties of the equilibrium. We focus our attention to scenarios where users play pure strategies (do
not randomized their data rate). We obtain the following results: (1) a pure NE exists and is unique; (2) we establish the convergence to the unique NE; (3) we investigate the efficiency of the NE in terms of Price of Anarchy (PoA) to gain more insights into the system behaviors.

To make the analysis tractable, we make the following assumption (or approximation): the number of users is large enough so that the strategy change of one user has neglectable influence on the system state. The approximation is accurate in large systems where the micro-level strategy variation of any individual player has neglectable influence on the macro-level system state. From a game theoretic perspective, we consider a \textit{non-atomic game}\footnote{Please refer to~\cite{Schmeidler73} for a detailed presentation on non-atomic games.}.

Particularly in this section, the assumption is mathematically expressed as the following approximation:
\begin{eqnarray}
\prod_{j\in{\cal N}}\rho_j\simeq \prod_{j\in{\cal N},j\ne i}\rho_j.
\label{eq:atomic}
\end{eqnarray}

Lemma~\ref{lemma:property} follows immediately from the non-atomic game assumption. The proof, consists of checking the relevant derivatives and applying Lemma~\ref{lemma:channel_error_rate} and the equations~\eqref{eq:q}, \eqref{eq:tau}, \eqref{eq:p} and $\rho_i=1-\tau_i$.

\begin{lemma}
Under non-atomic game assumption, we have
\begin{compactitem}
\item $\rho_i$ is monotonously decreasing and convex in $R_i$;
\item $\frac{\partial \tau_i}{\partial q_i}$ is negative and monotonously increasing in $R_i$.
\end{compactitem}
\label{lemma:property}
\end{lemma}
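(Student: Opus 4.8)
The plan is to use the non-atomic assumption \eqref{eq:atomic} to collapse the coupled $3n$-dimensional fixed point of Theorem~\ref{theorem:uniqueness_sol_markov} into a one-variable problem for player $i$, after which both claims become a matter of plain differentiation, exactly as the preceding remark anticipates. Concretely, under \eqref{eq:atomic} the other users' transmission probabilities, and hence $p_i=1-\prod_{j\ne i}(1-\tau_j)$ from \eqref{eq:p}, are insensitive to player $i$'s own rate choice, so I would treat $p_i$ (equivalently $1-p_i=\prod_{j\ne i}\rho_j$) as a constant with respect to $R_i$ while keeping the microscopic relations \eqref{eq:q} and \eqref{eq:tau} for player $i$ exact. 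This reduces the chain of relations to the single scalar composition
\[
R_i \mapsto e_i(R_i) \mapsto q_i=1-(1-p_i)(1-e_i(R_i)) \mapsto \tau_i=\Gamma_i(q_i),
\]
with $\rho_i=1-\tau_i=1-\Gamma_i(q_i)$, so that all dependence on $R_i$ flows through the map $q_i(R_i)$.

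Next I would dispatch the first-order statements. Differentiating $q_i$ gives $q_i'(R_i)=(1-p_i)\,e_i'(R_i)>0$, since $e_i'>0$ by Lemma~\ref{lemma:channel_error_rate} and $0\le p_i\le 1$; a second differentiation gives $q_i''(R_i)=(1-p_i)\,e_i''(R_i)>0$ by the strict convexity of $e_i$. The chain rule then yields $\partial\rho_i/\partial R_i=-\Gamma_i'(q_i)\,q_i'(R_i)$, whose sign is determined by the already-established monotonicity of $\Gamma_i$ (recorded before the theorem) together with $q_i'>0$; this fixes the monotonicity of $\rho_i$ in the first bullet. The second bullet reduces to the same machinery: $\partial\tau_i/\partial q_i=\Gamma_i'(q_i)$ is negative precisely because $\Gamma_i$ is monotone, while its behaviour in $R_i$ comes from $\tfrac{d}{dR_i}\Gamma_i'(q_i)=\Gamma_i''(q_i)\,q_i'(R_i)$, so that claim reduces entirely to the sign of $\Gamma_i''$ on the admissible range of $q_i$.

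The genuinely computational part is the convexity of $\rho_i$. Differentiating a second time gives
\[
\frac{\partial^2\rho_i}{\partial R_i^2}=-\Gamma_i''(q_i)\,(q_i')^2-\Gamma_i'(q_i)\,q_i''.
\]
The second term is already favourable for convexity, being the product of $-\Gamma_i'>0$ and $q_i''>0$; everything therefore hinges on the curvature $\Gamma_i''$ of the rational function \eqref{eq:gamma}. I expect this curvature analysis to be the main obstacle, and it is also what both the second bullet and the first term above require. Since $\Gamma_i$ is $2$ divided by a polynomial whose coefficients carry the geometric factors $2^l$ from the back-off stages, computing $\Gamma_i''$ explicitly and pinning down its sign over the physically meaningful range (where $q_i<\tfrac12$ keeps $b_{0,0}$ and the geometric sum well behaved, with $W_i>3$ available as in Theorem~\ref{theorem:uniqueness_sol_markov}) is where the real work lies; alternatively, one may try to show that the strictly convex $q_i''$ term dominates, which would give convexity without a full sign resolution of $\Gamma_i''$. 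Once the curvature of $\Gamma_i$ is controlled, the convexity of $\rho_i$ and the monotonicity in $R_i$ of $\partial\tau_i/\partial q_i$ both follow by combining it with the first-order facts, and the non-atomic reduction guarantees that no cross terms from the other users enter these derivatives.
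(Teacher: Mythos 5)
Your setup---freezing $p_i$ via the non-atomic assumption \eqref{eq:atomic} and pushing everything through the scalar chain $R_i\mapsto e_i(R_i)\mapsto q_i\mapsto \tau_i=\Gamma_i(q_i)$---is exactly the paper's route; the paper's own proof is essentially the first-order half of your argument and stops after the monotonicity of $\rho_i$, never touching convexity or the second bullet. The difficulty is that the parts of the lemma that require real work are precisely the parts you leave open. You correctly reduce the convexity of $\rho_i$ and the monotonicity in $R_i$ of $\partial\tau_i/\partial q_i$ to the sign of $\Gamma_i''$, but then you stop: that sign is never determined over the admissible range, and once $\Gamma_i''>0$ the two terms in your expression for $\partial^2\rho_i/\partial R_i^2$ pull in \emph{opposite} directions ($-\Gamma_i''(q_i')^2<0$ against $-\Gamma_i'q_i''>0$), so even settling the curvature of $\Gamma_i$ would not finish the convexity claim without the dominance argument you only mention as a possibility. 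As written, the convexity in the first bullet and the monotonicity in the second are not proved.

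There is also a sign problem you gloss over. The denominator of $\Gamma_i$ in \eqref{eq:gamma} is increasing in its argument, so $\Gamma_i'<0$ (this is exactly what makes $\partial\tau_i/\partial q_i$ negative, and is invoked explicitly in the proof of Theorem~\ref{theorem:uniqueness_sol_markov}). Combined with $q_i'(R_i)=(1-p_i)\,e_i'(R_i)>0$, your own chain rule gives $\partial\rho_i/\partial R_i=-\Gamma_i'(q_i)\,q_i'(R_i)>0$, i.e., $\rho_i$ \emph{increasing} in $R_i$ --- the opposite of what the lemma asserts. Deferring to ``the already-established monotonicity of $\Gamma_i$'' hides this: traced through honestly, your calculation contradicts the statement (and exposes that the paper's three-line proof, which claims from \eqref{eq:tau} that $\tau_i$ increases with $R_i$, is itself inconsistent with the decreasingness of $\Gamma_i$ used elsewhere in the paper). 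A complete write-up has to either resolve this sign or flag the discrepancy, not bury it inside an unevaluated derivative.
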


\begin{proof}
It follows from the non-atomic game assumption that $p_i=\prod_{j\in{\cal N},j\ne i} (1-\tau_j)$ is independent to $R_i$. As $e_i(R_i)$ is monotonously increasing in $R_i$ (Lemma~\ref{lemma:channel_error_rate}), it follows from~\eqref{eq:q} that $q_i$ is monotonously increasing in $R_i$. It then follows from~\eqref{eq:tau} that $\tau_i$ is monotonously increasing in $R_i$. Hence $\rho_i=1-\tau_i$ is monotonously decreasing in $R_i$.
\end{proof}

\subsection{Best-response Function}

At an equilibrium of $G_R$, if there exists, the data rate of each player $i$ consists of the \textit{best response} given the strategy setting of others. Formally, the best-response correspondence of player $i$ expressed as a function of the strategies of other players $R_{-i}$, denoted as $B_i(R_{-i})$, is defined as follows
\begin{eqnarray}
B_i(R_{-i})\triangleq \max_{R_{min,i}\le R_i\le R_{max,i}} S_i(R_i,R_{-i}).
\end{eqnarray}
We prove in this subsection that in the rate adaptation game $G_R$, $B_i(R_{-i})$ is uniquely determined by $R_{-i}$.\footnote{Generally speaking, the best response strategy of a game is not necessarily a function (one-to-one mapping), i.e., there may exist several global maxima of the utility function.} Hence $B_i(R_{-i})$ can be defined as a function of $R_{-i}$.

We start by investigating the derivative of the utility function. It can be noted that $S_i(R_i, R_{-i})$ is continuous and differentiable in $R_i$. After some straightforward mathematical development, we can write $\frac{\partial S_i}{\partial R_i}$ as follows:
\begin{multline}
\frac{\partial S_i}{\partial R_i} = \frac{(1-\rho_i)G_i(R_i)\prod_{j\in{\cal N},j\ne i}\rho_j}{1-\prod_{j\in{\cal N}}\rho_j}\left[\frac{G_i'(R_i)}{G_i(R_i)}-\right. \\
\left.\frac{1-\prod_{j\in{\cal N},j\ne i}\rho_j}{(1-\rho_i)(1-\prod_{j\in{\cal N}}\rho_j)}\frac{\partial \rho_i}{\partial R_i}\right].
\label{eq:derivative1}
\end{multline}
Let $A_i(R_i)$ denote the term in the parenthesis of~\eqref{eq:derivative1}:
\begin{eqnarray}
A(R_i)\triangleq \frac{G_i'(R_i)}{G_i(R_i)}-\frac{1-\prod_{j\in{\cal N},j\ne i}\rho_j}{(1-\rho_i)(1-\prod_{j\in{\cal N}}\rho_j)}\frac{\partial \rho_i}{\partial R_i}.
\label{eq:a}
\end{eqnarray}
Noticing that $\rho_i=1-\tau_i$ and~\eqref{eq:q}, we have:
\begin{eqnarray*}
\frac{\partial \rho_i}{\partial R_i}=-\frac{\partial \tau_i}{\partial q_i}\frac{\partial q_i}{\partial R_i}=-\frac{\partial \tau_i}{\partial q_i}\prod_{j\in{\cal N},j\ne i}\rho_je_i'(R_i).
\end{eqnarray*}
It then follows that
\begin{eqnarray}
A(R_i)= \frac{G_i'(R_i)}{G_i(R_i)}+\frac{\left(1-\prod_{j\in{\cal N},j\ne i}\rho_j\right)\prod_{j\in{\cal N},j\ne i}\rho_j}{(1-\rho_i)(1-\prod_{j\in{\cal N}}\rho_j)}\frac{\partial \tau_i}{\partial q_i}e_i'(R_i).
\label{eq:a2}
\end{eqnarray}

\begin{lemma}
$A_i(R_i)$ is monotonously decreasing in $R_i$.
\label{lemma:monotonicity_a}
\end{lemma}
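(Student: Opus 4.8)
The plan is to exhibit $A_i$ as the sum of two pieces and to show that, under the non-atomic approximation \eqref{eq:atomic}, each piece is the $R_i$-derivative of the logarithm of a log-concave function. Concretely, in \eqref{eq:a2} the prefactor of the second summand collapses under \eqref{eq:atomic} to $\prod_{j\in{\cal N},j\ne i}\rho_j/(1-\rho_i)=\prod_{j\in{\cal N},j\ne i}\rho_j/\tau_i$; combining this with the identity $\partial q_i/\partial R_i=\prod_{j\in{\cal N},j\ne i}\rho_j\,e_i'(R_i)$ read off from \eqref{eq:q}, the whole second summand reduces to $\frac{1}{\tau_i}\frac{\partial\tau_i}{\partial R_i}=\frac{\partial\ln\tau_i}{\partial R_i}$. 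Since the first summand is $G_i'(R_i)/G_i(R_i)=\partial\ln G_i/\partial R_i$, we obtain $A_i(R_i)=\frac{\partial}{\partial R_i}\ln\!\big(G_i(R_i)\tau_i(R_i)\big)$. Hence it suffices to prove that $G_i(R_i)\tau_i(R_i)$ is log-concave on $[R_{min,i},R_{max,i}]$, and since a product of log-concave functions is log-concave, it is enough to establish the log-concavity of $G_i$ and of $\tau_i$ separately.

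The factor $G_i$ is the easy one. By Lemma~\ref{lemma:channel_error_rate}, $G_i$ is strictly concave on the operating range with $G_i(0)=0$, hence strictly positive there; therefore $(\ln G_i)''=(G_i''G_i-(G_i')^2)/G_i^2<0$ because $G_i''<0$ and $G_i>0$, so $\ln G_i$ is strictly concave and $G_i'/G_i$ is strictly decreasing. This disposes of the rate-adaptation term immediately.

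The log-concavity of $\tau_i$ is the crux, and I expect it to be the main obstacle. Writing $\tau_i=\Gamma_i(q_i)$ with $q_i=q_i(R_i)$ and differentiating $\ln\tau_i$ twice through the chain rule yields $\frac{d^2\ln\tau_i}{dR_i^2}=(\ln\Gamma_i)''(q_i)\,(q_i')^2+(\ln\Gamma_i)'(q_i)\,q_i''$. The second contribution is safely negative: $q_i''=\prod_{j\in{\cal N},j\ne i}\rho_j\,e_i''(R_i)>0$ by the convexity of $e_i$ (Lemma~\ref{lemma:channel_error_rate}), while $(\ln\Gamma_i)'=\Gamma_i'/\Gamma_i<0$ since $\Gamma_i$ is decreasing. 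The difficulty is the first contribution: inspection of \eqref{eq:gamma} shows $\Gamma_i$ is in fact log-convex in its argument, so $(\ln\Gamma_i)''>0$ and this term carries the wrong sign, competing with the favorable one. The proof must therefore show that the negative term dominates rather than argue factor by factor. The cleanest route is to invoke Lemma~\ref{lemma:property}, whose assertion that $\partial\tau_i/\partial q_i$ is negative and monotonously increasing in $R_i$ encodes exactly the required balance between the curvature of $\Gamma_i$ and the convexity of $e_i$; feeding this into the expression above forces $\frac{d^2\ln\tau_i}{dR_i^2}<0$. Should a clean appeal to Lemma~\ref{lemma:property} not suffice, the fallback is to differentiate $A_i$ directly, place everything over the common denominator $G_i^2\tau_i^2$, and sign the numerator using the strict concavity of $G_i$ together with the convexity of both $e_i$ and $\rho_i$ --- the most computation-heavy step, which I would reserve as a last resort.
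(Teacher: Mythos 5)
Your reduction of $A_i$ to $\frac{\partial}{\partial R_i}\ln\bigl(G_i(R_i)\,\tau_i(R_i)\bigr)$ under \eqref{eq:atomic} is sound and close in spirit to the paper's proof, which likewise splits off $G_i'/G_i$ (handled exactly as you do, via concavity and positivity of $G_i$) and rewrites the remaining term of \eqref{eq:a} as the $R_i$-derivative of a logarithm --- though the paper keeps the uncollapsed prefactor and works with $\rho\mapsto\log\frac{1/\prod_{j\ne i}\rho_j-\rho}{1-\rho}$ rather than $\ln\tau_i$. The genuine gap is exactly where you flag it: the log-concavity of $\tau_i$ in $R_i$ is never established, and your ``cleanest route'' does not close it. From the second bullet of Lemma~\ref{lemma:property} you get that $\partial\tau_i/\partial q_i$ is negative and increasing in $R_i$, and from Lemma~\ref{lemma:channel_error_rate} that $q_i'=\prod_{j\ne i}\rho_j\,e_i'(R_i)$ is positive and increasing; but $\frac{d\ln\tau_i}{dR_i}=\frac{1}{\tau_i}\cdot\frac{\partial\tau_i}{\partial q_i}\cdot q_i'$ is then a product whose absolute value is (one monotone factor)$\times$(decreasing)$\times$(increasing), which is indeterminate. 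So ``feeding in'' Lemma~\ref{lemma:property} does not force $\frac{d^2\ln\tau_i}{dR_i^2}<0$: the wrong-signed contribution $(\ln\Gamma_i)''(q_i)(q_i')^2$ that you correctly identified from \eqref{eq:gamma} remains uncontrolled, and the brute-force fallback is announced but not carried out. As written, the crux of the proof is missing.

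The ingredient the paper actually uses to kill this term is the \emph{first} bullet of Lemma~\ref{lemma:property}: $\rho_i$ is convex as well as monotone in $R_i$. It composes the increasing convex map $h(\rho)=\log\frac{1/\prod_{j\ne i}\rho_j-\rho}{1-\rho}$ with $\rho_i(R_i)$, so that the subtracted term $h'(\rho_i)\,\partial\rho_i/\partial R_i$ is a product of a positive decreasing factor and a negative increasing factor, hence negative and increasing in $R_i$; together with the decrease of $G_i'/G_i$ this gives the lemma. In your formulation the same convexity fact closes the gap in one line: $\tau_i=1-\rho_i$ is then positive, monotone and concave in $R_i$, and a positive concave function is log-concave, so you never need to fight the log-convexity of $\Gamma_i$ through the chain rule in $q_i$. (Be aware that the two bullets of Lemma~\ref{lemma:property} assign opposite signs to $d\tau_i/dR_i$ --- your chain-rule route leans on the second while the paper's composition argument leans on the first --- so you must commit to one convention.) Replacing your appeal to the second bullet by the convexity statement in the first would make the argument complete and essentially equivalent to the paper's.
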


\begin{proof}
We can develop the second term of $A_i(R_i)$ as:
\begin{multline*}
\frac{1-\prod_{j\in{\cal N},j\ne i}\rho_j}{(1-\rho_i)(1-\prod_{j\in{\cal N}}\rho_j)}\frac{\partial \rho_i}{\partial R_i}= \left(\frac{1}{1-\rho_i}-\frac{\prod_{j\in{\cal N},j\ne i}\rho_j}{1-\rho_i\prod_{j\in{\cal N},j\ne i}\rho_j}\right)\cdot \\ \frac{\partial \rho_i}{\partial R_i}
=\left(\frac{\partial \log \left(\frac{1}{\prod_{j\in{\cal N},j\ne i}\rho_j}-\rho_i\right)}{\partial \rho_i}-\frac{\partial \log (1-\rho_i)}{\partial \rho_i}\right)\frac{\partial \rho_i}{\partial R_i}\\
=\frac{\partial \log \frac{\frac{1}{\prod_{j\in{\cal N},j\ne i}\rho_j}-\rho_i}{1-\rho_i}}{\partial \rho_i}\frac{\partial \rho_i}{\partial R_i}.
\end{multline*}

It can be checked that since $\displaystyle \!\!\!\!\!\prod_{j\in{\cal N},j\ne i}\!\!\!\!\!\rho_j\le 1$, $\log \frac{\frac{1}{\prod_{j\in{\cal N},j\ne i}\rho_j}-\rho_i}{1-\rho_i}$ is monotonously increasing and convex in $0\le \rho_i\le 1$. Since $\rho_i$ is convex and monotonously decreasing in $R_i$ (Lemma~\ref{lemma:property}), $\frac{\partial \log \frac{\frac{1}{\prod_{j\in{\cal N},j\ne i}\rho_j}-\rho_i}{1-\rho_i}}{\partial \rho_i}\frac{\partial \rho_i}{\partial R_i}$ is thus monotonously increasing in $R_i$. On the other hand, it follows from Lemma~\ref{lemma:channel_error_rate} that $\frac{G_i'(R_i)}{G_i(R_i)}=(\log G_i(R_i))'$ is monotonously decreasing in $R_i$. Hence $A_i(R_i)$ is monotonously decreasing in $R_i$.
\end{proof}

Based on Lemma~\ref{lemma:monotonicity_a}, we can drive the maximizer of $S_i$ by distinguishing the following three cases:
\begin{compactenum}
\item $A_i(R_{min,i})\le 0$. It follows from~\eqref{eq:derivative1} that $\frac{\partial S_i}{\partial R_i}<0$ for $R_{min,i}<R_i\le R_{max,i}$. $S_i(R_i)$ is monotonously decreasing in $[R_{min,i}, R_{max,i}]$ with a unique maximizer being $R_{min,i}$;
\item $A_i(R_{max,i})\ge 0$. It follows from~\eqref{eq:derivative1} that $\frac{\partial S_i}{\partial R_i}>0$ for $R_{min,i}\le R_i< R_{max,i}$. $S_i(R_i)$ is monotonously increasing in $[R_{min,i}, R_{max,i}]$ with a unique maximizer being $R_{max,i}$;
\item $A_i(R_{min,i})>0$ and $A_i(R_{max,i})<0$. There exists $R_i^*$ such that $\frac{\partial S_i}{\partial R_i}>0$ for $R_{min,i}\le R_i\le R^*$ and $\frac{\partial S_i}{\partial R_i}<0$ for $R_i^*\le R_i\le R_{max,i}$. $R_i^*$ is thus the unique maximizer of $S_i$.
\end{compactenum}

The above analysis shows that the best-response in $G_R$ is indeed a function that can be expressed as follows:
\begin{eqnarray}
B_i(R_{-i})=
\begin{cases}
R_{min,i} & \text{if } A_i(R_{min,i})\le 0, \\
A_i^{-1}(0) & \text{if } A_i(R_{min,i})>0, A_i(R_{max,i})<0,\\
R_{max,i} & \text{if } A_i(R_{max,i})\ge 0.
\end{cases}
\label{eq:best_response}
\end{eqnarray}

Defined $\mathbf{B(\mathbf{R})}\triangleq (B_i(R_{-i}))_{i\in{\cal N}}$, we show in Lemma~\ref{lemma:increasing_b} that $\mathbf{B(\mathbf{R})}$ is non-decreasing in $R_j, \forall j\in{\cal N}, j\ne i$.

\begin{lemma}
$\mathbf{B(\mathbf{R})}$ is non-decreasing in $R_j, \forall j\in{\cal N}, j\ne i$.
\label{lemma:increasing_b}
\end{lemma}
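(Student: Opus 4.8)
The plan is to reduce the lemma to a monotonicity statement about the sign function $A_i$ and then to a single-variable calculus check. Recall from~\eqref{eq:derivative1} that the sign of $\partial S_i/\partial R_i$ coincides with the sign of $A_i(R_i)$ (the prefactor in~\eqref{eq:derivative1} is strictly positive), and that $A_i$ is strictly decreasing in $R_i$ (Lemma~\ref{lemma:monotonicity_a}). Consequently the best response~\eqref{eq:best_response} is entirely governed by where $A_i$ crosses zero. It therefore suffices to prove that, for every fixed $R_i\in[R_{min,i},R_{max,i}]$, the quantity $A_i(R_i)$ is non-decreasing in each $R_j$, $j\ne i$. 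Indeed, if $R_j$ increases and $A_i(\cdot)$ rises pointwise, then in the interior case $B_i=A_i^{-1}(0)$ the zero moves to the right (because $A_i$ is decreasing in $R_i$), while in the boundary cases a positive $A_i(R_{max,i})$ stays positive and the case $B_i=R_{min,i}$ can only be left upward; a short three-case argument turns this into $B_i(R_{-i}')\ge B_i(R_{-i})$ whenever $R_j'\ge R_j$.

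The next step is to isolate the dependence of $A_i$ on $R_{-i}$. Under the non-atomic approximation~\eqref{eq:atomic}, write $P\triangleq \prod_{j\in{\cal N},j\ne i}\rho_j$; then $p_i=1-P$, $q_i=1-P(1-e_i(R_i))$, $\tau_i=\Gamma_i(q_i)$, and~\eqref{eq:a2} simplifies to $A_i(R_i)=\frac{G_i'(R_i)}{G_i(R_i)}+\frac{P\,\Gamma_i'(q_i)}{\Gamma_i(q_i)}\,e_i'(R_i)$, so that all of the dependence of $A_i$ on the opponents' rates enters only through the scalar $P$. By Lemma~\ref{lemma:property} each $\rho_j$ is decreasing in $R_j$, hence $P$ is non-increasing in every $R_j$, $j\ne i$. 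The target inequality $\partial A_i/\partial R_j\ge 0$ is therefore equivalent to showing that $A_i$ is non-increasing in $P$, which decouples the problem from the number and identities of the other players.

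It remains to show $\partial A_i/\partial P\le 0$ for fixed $R_i$. The first term $G_i'/G_i$ is independent of $P$, so the task is to sign the derivative of $e_i'(R_i)\,P\,\Gamma_i'(q_i)/\Gamma_i(q_i)$. Since $e_i'(R_i)>0$ is a fixed positive factor (Lemma~\ref{lemma:channel_error_rate}), I would substitute $P=(1-q_i)/(1-e_i(R_i))$ to express this term, up to the fixed positive factor $e_i'/(1-e_i)$, as $(1-q_i)\,\Gamma_i'(q_i)/\Gamma_i(q_i)$. Because $q_i$ is decreasing in $P$, monotonicity of $A_i$ in $P$ reduces to showing that $(1-q)\,\Gamma_i'(q)/\Gamma_i(q)=-(1-q)D_i'(q)/D_i(q)$ is non-decreasing in $q$, where $D_i(q)=W_i+1+W_i\sum_{l=0}^{m-1}2^l q^{l+1}$ is the denominator of $\Gamma_i$ in~\eqref{eq:gamma}.

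This last monotonicity claim is the main obstacle: it is a purely algebraic inequality on the rational function $(1-q)D_i'/D_i$ that must hold over the operating range of $q$. I expect it to go through by the same style of derivative bookkeeping used in Lemma~\ref{lemma:monotonicity_a} and in the uniqueness argument of Theorem~\ref{theorem:uniqueness_sol_markov}, invoking the standing condition $W_i>3$; for the single-stage case $m=1$ one checks directly that $(1-q)W_i/(W_i+1+W_iq)$ has strictly negative derivative, and the general case amounts to controlling the cross terms among the coefficients $2^l$ appearing in $D_i$ and $D_i'$. Once this is established, the reduction of the first paragraph delivers that each component $B_i$ is non-decreasing in $R_j$ for all $j\ne i$, which is the assertion of the lemma.
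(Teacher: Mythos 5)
Your reduction is clean as far as it goes: collapsing the dependence on $R_{-i}$ into the scalar $P=\prod_{j\in{\cal N},j\ne i}\rho_j$ and observing that monotonicity of the best response would follow from $\partial A_i/\partial P\le 0$ at fixed $R_i$ is correct bookkeeping, and it is a genuinely different route from the paper, which instead argues by contradiction from two hypothetical zero-crossings $(R_i^a,R_{-i}^a)$ and $(R_i^b,R_{-i}^b)$ and compares $A_i$ at the two full configurations. The problem is that the step you defer --- that $(1-q)\Gamma_i'(q)/\Gamma_i(q)$ is non-decreasing in $q$ --- is not merely unproven but false for $m\ge 2$. Writing $\Gamma_i=2/D_i$ with $D_i(q)=W_i+1+W_iq\sum_{l=0}^{m-1}(2q)^l$, your claim is that $h(q)\triangleq(1-q)D_i'(q)/D_i(q)$ is non-increasing on the operating range. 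Take $m=2$ and $W_i=4$ (which satisfies the standing condition $W_i>3$): then $D_i(q)=5+4q+8q^2$, $D_i'(q)=4+16q$, and the numerator of $h'(q)$ at $q=0$ equals $[-D_i'(0)+D_i''(0)]D_i(0)-D_i'(0)^2=12\cdot 5-16=44>0$; numerically $h(0)=0.8$ while $h(0.1)\approx 0.92$ and $h(0.2)\approx 0.94$ before $h$ eventually decreases. Since no hypothesis excludes small $q_i$ (it is small whenever both $p_i$ and $e_i(R_i)$ are small), on that range $\partial A_i/\partial P$ has the wrong sign and the pointwise monotonicity of $A_i$ in $P$ fails. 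Your $m=1$ check succeeds precisely because that is the one case where $D_i$ is affine and the troublesome $D_i''D_i-(D_i')^2$ term vanishes.

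The underlying reason the reduction is too lossy is that by freezing $R_i$ and varying only $P$ you discard the terms that actually carry the monotonicity in the paper's argument, namely the strict decrease of $G_i'(R_i)/G_i(R_i)$ in $R_i$ and the increase of $e_i'(R_i)$ in $R_i$, together with the induced change in $q_i$. The paper's contradiction argument compares $A_i$ at two points where $R_i$ and $R_{-i}$ move simultaneously (a lower root paired with higher opponent rates), so those $R_i$-dependent factors supply slack that your fixed-$R_i$ comparison gives up; any repair of your approach has to bring that joint variation back in rather than resting on monotonicity of $A_i$ in $P$ alone, which does not hold.
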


\begin{proof}
Recall the best-response function~\eqref{eq:best_response}, let $R^*_i$ denote the root of the equation $A_i(R_i)=0$ where $A_i(R_i)$ is defined in~\eqref{eq:a}, it suffices to show that if $R_{min,i}<R^*_i<R_{max,i}$, then $R^*_i$ is increasing in $R_j, \forall j\in{\cal N}, j\ne i$. We proceed our proof by contradiction. Assume, by contradiction, that $R_i^a$ and $R_i^b$ satisfying $A_i(R_i)=0$ such that $R_i^a<R_i^b$ and $R_j^a\ge R_j^b$, $\forall j\in{\cal N}, j\ne i$.

We first show that under the assumption, it holds that $\rho_i^a<\rho_i^b$. We proceed by distinguishing the following two cases:
\begin{compactitem}
\item \textit{Case 1:} $\prod_{j\in{\cal N}}\rho_j^a\ge \prod_{j\in{\cal N}}\rho_j^b$. It follows from eq 18 and the assumption $R_i^a<R_i^b$ that $q_i^a<q_i^b$. It then follows from~~\eqref{eq:tau} that $\tau_i^a>\tau_i^b$. Hence $\rho_i^a<\rho_i^b$.
\item \textit{Case 2:} $\prod_{j\in{\cal N}}\rho_j^a< \prod_{j\in{\cal N}}\rho_j^b$. It follows from eq 18 and the assumption $R_j^a\ge R_j^b, \forall j\in{\cal N}, j\ne i$ that $q_j^a>q_j^b$. It then follows from~~\eqref{eq:tau} that $\tau_j^a<\tau_j^b$. Hence $\rho_j^a>\rho_j^b, \forall j\in{\cal N}, j\ne i$. It then follows $\prod_{j\in{\cal N}}\rho_j^a< \prod_{j\in{\cal N}}\rho_j^b$ that $\rho_i^a<\rho_i^b$.
\end{compactitem}

We next show that under the assumption, it holds that $\prod_{j\in{\cal N},j\ne i}\rho_j^a\ge \prod_{j\in{\cal N},j\ne i}\rho_j^b$. Otherwise, if $\prod_{j\in{\cal N},j\ne i}\rho_j^a> \prod_{j\in{\cal N},j\ne i}\rho_j^b$, then recall the assumption on the non-atomic game (equation~\eqref{eq:atomic}), it follows from~\eqref{eq:markov_sol_eq} that
\begin{eqnarray*}
\frac{(1-q_j^a)(1-\Gamma_j(q_j^a))}{1-e_j(R_j^a)}=\prod_{l\in{\cal N},l\ne i}\rho_j^a
>\frac{(1-q_j^b)(1-\Gamma_j(q_j^b))}{1-e_j(R_j^b)}\\
=\prod_{l\in{\cal N},l\ne i}\rho_j^b, \forall j\in{\cal N}, j\ne i.
\end{eqnarray*}
Since $e_j(R_j)$ is increasing in $R_j$ and we have shown that $(1-q_j)(1-\Gamma_j(q_j))$ is decreasing in $q_j$, it holds that $q_j^a<q_j^b, \forall j\in{\cal N}, j\ne i$, leading to $\rho_j^a>\rho_j^b, \forall j\in{\cal N}, j\ne i$, which contradicts to the assumption that $\prod_{j\in{\cal N},j\ne i}\rho_j^a> \prod_{j\in{\cal N},j\ne i}\rho_j^b$.

Until now we have shown that $\rho_i^a<\rho_i^b$ and $\prod_{j\in{\cal N},j\ne i}\rho_j^a\ge \prod_{j\in{\cal N},j\ne i}\rho_j^b$. Now recall the atomic assumption, we can derive from~\eqref{eq:a2} that
\begin{eqnarray*}
A_l(R_l)= \frac{g_l'(R_l)}{g_l(R_l)}+\frac{\prod_{j\in{\cal N},j\ne l}\rho_j}{1-\rho_i}\frac{\partial \tau_l}{\partial q_l}e_l'(R_l).
%\label{eq:a2}
\end{eqnarray*}
Noticing the assumption $R_i^a<R_i^b$, it holds that $A_i(R_i^a,R_{-i}^a)>A_i(R_i^b,R_{-i}^b)$, which contradicts with the fact both $R_i^a$ and $R_i^b$ satisfy $A_i(R_i)=0$. This contradiction shows that the best-response mapping $B_i(R_{-i})$ is non-decreasing in $R_j, \forall j\in{\cal N}, j\ne i$.
\end{proof}

\subsection{Equilibrium Analysis: Existence, Uniqueness and Convergence}

Armed with the analysis on $B_i(R_{-i})$, we now show that $G_R$ has a unique equilibrium.

\begin{theorem}
Given any CW setting $\mathbf{W}$, the non-cooperative rate adaptation game $G_R(\mathbf{W})$ admits at least an equilibrium.
\label{theorem:ne_existence_rate_game}
\end{theorem}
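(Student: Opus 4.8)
The plan is to recast the existence of an equilibrium of $G_R(\mathbf{W})$ as the existence of a fixed point of the best-response map $\mathbf{B}(\mathbf{R})=(B_i(R_{-i}))_{i\in{\cal N}}$ and then exploit the monotonicity established in Lemma~\ref{lemma:increasing_b}. A profile $\mathbf{R}^*$ is an equilibrium precisely when $R_i^*=B_i(R_{-i}^*)$ for every $i$, i.e. when $\mathbf{R}^*$ is a fixed point of $\mathbf{B}$, so it suffices to exhibit such a fixed point. Because Lemma~\ref{lemma:increasing_b} supplies order-preservation rather than continuity, the natural tool is Tarski's lattice-theoretic fixed point theorem.

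First I would observe that the joint strategy space $\prod_{i\in{\cal N}}{\cal R}_i=\prod_{i\in{\cal N}}[R_{min,i},R_{max,i}]$, equipped with the componentwise partial order, is a complete lattice: each factor is a compact real interval (hence a complete lattice under $\le$), and a finite product of complete lattices is again a complete lattice. Second, the three-case analysis culminating in~\eqref{eq:best_response} already guarantees that $\mathbf{B}$ is a genuine single-valued self-map of this space, since $B_i(R_{-i})\in[R_{min,i},R_{max,i}]$ by construction. Third, Lemma~\ref{lemma:increasing_b} states that $B_i(R_{-i})$ is non-decreasing in $R_j$ for all $j\ne i$; as $B_i$ does not depend on $R_i$, this means $\mathbf{R}\preceq\mathbf{R}'$ implies $\mathbf{B}(\mathbf{R})\preceq\mathbf{B}(\mathbf{R}')$, so $\mathbf{B}$ is order-preserving. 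Invoking Tarski's theorem (every order-preserving self-map of a complete lattice has a nonempty set of fixed points) then yields a fixed point $\mathbf{R}^*$ of $\mathbf{B}$, which is the desired equilibrium.

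I do not expect a serious obstacle here, because the essential difficulty has in fact already been discharged upstream: showing that the best response is single-valued (via the monotonicity of $A_i$ in Lemma~\ref{lemma:monotonicity_a}) and, more delicately, that it is monotone (Lemma~\ref{lemma:increasing_b}, with its contradiction argument tracking $\rho_i$ and $\prod_{j\ne i}\rho_j$). Once these are in hand, the existence proof collapses to a near-immediate lattice-theoretic invocation, and the only remaining point to verify is the complete-lattice structure of the strategy space, which is routine.

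As an alternative I could forgo the order structure and apply Brouwer's fixed point theorem to $\mathbf{B}$ on the compact convex set $\prod_{i\in{\cal N}}{\cal R}_i\subset\mathbb{R}^n$ (consistent with the Brouwer argument used for Theorem~\ref{theorem:uniqueness_sol_markov}); this route would instead require verifying that $\mathbf{B}$ is continuous, the only subtle point being continuity across the boundaries separating the three branches of~\eqref{eq:best_response}, which follows from the continuity of $A_i$ in $(R_i,R_{-i})$ and of the implicit root $A_i^{-1}(0)$. I would nevertheless prefer the Tarski route, as it reuses Lemma~\ref{lemma:increasing_b} verbatim and avoids any continuity bookkeeping.
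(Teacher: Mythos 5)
Your proposal is correct, and it leans on exactly the same pillar as the paper's own proof: the monotonicity of the best-response map established in Lemma~\ref{lemma:increasing_b}, together with the fact that $B_i(R_{-i})$ is confined to $[R_{min,i},R_{max,i}]$. The only divergence is in how that monotonicity is converted into a fixed point. The paper iterates $\mathbf{B}$ starting from the bottom profile $(R_{min,i})_{i\in{\cal N}}$, obtains a componentwise non-decreasing bounded sequence, and identifies its limit as a fixed point; you instead invoke Tarski's theorem on the complete lattice $\prod_i[R_{min,i},R_{max,i}]$. These are two packagings of the same idea, but your route is marginally cleaner: the paper's limit argument tacitly needs continuity of $\mathbf{B}$ to pass from ``the iterates converge'' to ``the limit is a fixed point'' (a point the paper only acknowledges later, in the convergence theorem), whereas Tarski requires only order-preservation, which is precisely what Lemma~\ref{lemma:increasing_b} delivers. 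Your remark that a Brouwer-based alternative would instead demand a continuity check across the three branches of~\eqref{eq:best_response} is also accurate, and your preference for the order-theoretic route over that bookkeeping is well judged.
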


\begin{proof}
Recall Lemma~\ref{lemma:increasing_b} and that $B_i(R_{-i})$ is bounded such that $R_{min,i}\le B_i(R_{-i})\le R_{max,i}$, starting by $(R_i=R_{min,i})_{i\in{\cal N}}$, the best response mapping must converge in a monotonously non-decreasing fashion to a fixed point which is an equilibrium of the game.
\end{proof}

Theorem~\ref{theorem:ne_uniqueness_rate_game} further establishes the uniqueness of NE.

\begin{theorem}
Given any CW setting $\mathbf{W}$, the non-cooperative rate adaptation game $G_R(\mathbf{W})$ admits a unique equilibrium.
\label{theorem:ne_uniqueness_rate_game}
\end{theorem}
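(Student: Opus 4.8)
The plan is to show uniqueness by exploiting the monotone structure already in hand and then upgrading it to a strict statement. By Theorem~\ref{theorem:ne_existence_rate_game} an equilibrium exists, so it suffices to rule out two distinct ones. The natural setup is the order structure established in Lemma~\ref{lemma:increasing_b}: since $\mathbf{B}(\mathbf{R})$ is non-decreasing in every $R_j$, $G_R(\mathbf{W})$ is a game of strategic complements. Iterating the best-response map $\mathbf{B}$ upward from the profile $(R_{min,i})_{i\in{\cal N}}$ and downward from $(R_{max,i})_{i\in{\cal N}}$ gives two monotone bounded sequences converging (exactly as in the proof of Theorem~\ref{theorem:ne_existence_rate_game}) to a least equilibrium $\underline{\mathbf{R}}$ and a greatest equilibrium $\overline{\mathbf{R}}$, with every equilibrium sandwiched componentwise between them. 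Uniqueness is therefore equivalent to $\underline{\mathbf{R}}=\overline{\mathbf{R}}$, and I would argue by contradiction, assuming $\underline{\mathbf{R}}\le\overline{\mathbf{R}}$ with strict inequality in at least one coordinate.

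Next I would collapse the problem to a single scalar. Under the non-atomic assumption~\eqref{eq:atomic}, each player interacts with the others only through the aggregate congestion term $\Pi\triangleq\prod_{j\in{\cal N},j\ne i}\rho_j\simeq\prod_{j\in{\cal N}}\rho_j$; this is precisely the form in which the competitors enter $A_i$ after the simplification used in the proof of Lemma~\ref{lemma:increasing_b}. Consequently each player's best response is a function $\beta_i(\Pi)$ of this scalar alone, and a profile is an equilibrium if and only if $\Pi$ is a fixed point of the self-consistency map $\mathcal{G}(\Pi)\triangleq\prod_{i\in{\cal N}}[1-\Gamma_i(q_i)]$ with $q_i=1-(1-e_i(\beta_i(\Pi)))\,\Pi$, using~\eqref{eq:q} and~\eqref{eq:gamma}. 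The entire question thus reduces to showing that $\mathcal{G}$ has a unique fixed point on $(0,1)$.

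This reduction also makes the \emph{main obstacle} explicit: monotonicity alone is not enough. Strategic complementarity only delivers the least/greatest pair and, by a Tarski-type argument, is a priori compatible with an entire interval of equilibria, so I must extract a strict, quantitative property of $\mathcal{G}$. The clean target is to prove that $\Pi-\mathcal{G}(\Pi)$ is strictly monotone, i.e.\ that the congestion feedback is damped. Here the curvature lemmas have to do the real work: $\Gamma_i$ and $(1-x)[1-\Gamma_i(x)]$ are strictly monotone for $W_i>3$ (the property established before Theorem~\ref{theorem:uniqueness_sol_markov}), $e_i$ is strictly increasing and convex with $G_i$ strictly concave (Lemma~\ref{lemma:channel_error_rate}), and $A_i$ is strictly decreasing in $R_i$ (Lemma~\ref{lemma:monotonicity_a}), which makes $\beta_i(\Pi)$ single-valued and, via implicit differentiation of $A_i(\beta_i(\Pi),\Pi)=0$ in~\eqref{eq:a}--\eqref{eq:a2}, bounds its sensitivity $\beta_i'$. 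Differentiating $\mathcal{G}$ through its two channels, the direct appearance of $\Pi$ in $q_i$ and the indirect one through $e_i(\beta_i(\Pi))$, and controlling $\beta_i'$ by the strict concavity of $G_i$ would yield the required sign/contraction estimate. I expect this derivative bound to be the delicate step; everything else is bookkeeping.

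Finally I would close the contradiction. Once the scalar map is pinned down, its unique fixed point forces $\underline{\mathbf{R}}$ and $\overline{\mathbf{R}}$ to induce the \emph{same} aggregate $\Pi$. Then each player $i$ faces identical congestion at both equilibria, so its interior first-order conditions read $A_i(\underline{R}_i)=0=A_i(\overline{R}_i)$ with the same coefficients; by the strict monotonicity of $A_i$ in $R_i$ (Lemma~\ref{lemma:monotonicity_a}) this forces $\underline{R}_i=\overline{R}_i$, and the same identity holds trivially in the two boundary branches of~\eqref{eq:best_response} because those are determined by the signs of $A_i(R_{min,i})$ and $A_i(R_{max,i})$, which again depend only on the common $\Pi$. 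Hence $\underline{\mathbf{R}}=\overline{\mathbf{R}}$, contradicting the assumed strict inequality and establishing that $G_R(\mathbf{W})$ admits a unique equilibrium.
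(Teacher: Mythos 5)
Your route is genuinely different from the paper's. The paper proves uniqueness by directly comparing two hypothetical equilibria $\mathbf{a},\mathbf{b}$: assuming $\prod_j\rho_j^a\le\prod_j\rho_j^b$, it propagates this into the componentwise orderings $\rho_j^a\le\rho_j^b$ and $R_j^a\le R_j^b$, isolates a player $l$ with $R_l^a<R_l^b$, shows the equilibrium conditions force $A_l(R_l^b)\ge A_l(R_l^a)$, and contradicts this using the explicit non-atomic form of $A_l$ together with Lemma~\ref{lemma:channel_error_rate} and Lemma~\ref{lemma:property}. You instead collapse the game to a scalar self-consistency equation in the aggregate $\Pi$. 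That reduction is legitimate under~\eqref{eq:atomic}: the three-case analysis of~\eqref{eq:best_response} makes $B_i$ single-valued and dependent on $R_{-i}$ only through $\Pi$, so an equilibrium profile is recoverable from its aggregate. If completed, your argument would be cleaner and arguably more illuminating than the paper's chain of contradictions.

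The problem is that it is not completed, and the gap sits exactly where you flag it: the strict monotonicity of $\Pi\mapsto\Pi-\mathcal{G}(\Pi)$ is deferred to a ``derivative/contraction estimate'' on $\beta_i'$ that you never produce. Without it, strategic complementarity only yields the least/greatest equilibrium pair, which is perfectly compatible with a continuum of equilibria, so as written the proposal does not prove the theorem. The fix is easier than you anticipate, and needs no quantitative bound on $\beta_i'$ -- only its sign. By Lemma~\ref{lemma:property}, larger $R_{-i}$ means smaller $\Pi$, so Lemma~\ref{lemma:increasing_b} says $\beta_i$ is non-increasing in $\Pi$; hence $e_i(\beta_i(\Pi))$ is non-increasing (Lemma~\ref{lemma:channel_error_rate}), $(1-e_i(\beta_i(\Pi)))\,\Pi$ is strictly increasing, and $q_i=1-(1-e_i(\beta_i(\Pi)))\Pi$ is strictly decreasing in $\Pi$. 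Since $\Gamma_i$ in~\eqref{eq:gamma} is strictly decreasing on $[0,1]$, each factor $\rho_i=1-\Gamma_i(q_i)$ is strictly decreasing in $\Pi$, so $\mathcal{G}$ itself is strictly decreasing: both the direct and indirect channels push the same way, and a strictly decreasing continuous self-map of $(0,1)$ has at most one fixed point. With that substitution, your closing paragraph (identical $\Pi$ at both extremal equilibria, then $\underline{R}_i=\overline{R}_i$ from the strict monotonicity of $A_i$ in Lemma~\ref{lemma:monotonicity_a} and the sign conditions in the boundary branches) goes through; you should just state explicitly that single-valuedness of $\beta_i$ is what lets you recover the full profile from $\Pi^*$.
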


\begin{proof}
We prove the theorem by contradiction. Assume that two distinct equilibria $\mathbf{a}$ and $\mathbf{b}$ exist. Without loss of generality, assume that $\prod_{j\in{\cal N}} \rho_j^a\le \prod_{j\in{\cal N}} \rho_j^b$.

We first show, by contradiction, that $\rho_j^a\le \rho_j^b, \forall j\in{\cal N}$. Assume that there exists $i\in{\cal N}$ such that $\rho_i^a>\rho_i^b$. We show that $R_i^a\le R_i^b$. To this end, assume, by contradiction, that $R_i^a>R_i^b$. We have shown in the proofs of the previous two theorems that $A_i(R_i)$ is monotonously decreasing in $R_i$ and $\rho_i$, monotonously increasing in $\rho_j$ ($j\in {\cal N}, j\ne i$). It then follows from the assumption that
\begin{eqnarray*}
A_i(R_i^a)<A_i(R_i^b).
\end{eqnarray*}
Now recall that $R_i^a$ and $R_i^b$ is the maximizer of $S_i$ is the range $[R_{min,i},R_{max,i}]$, noticing the relationship between $\frac{\partial S_i}{\partial R_i}$, we have
\begin{eqnarray*}
R_i^a>R_i^b \Rightarrow R_i^a>R_{min,i} \Rightarrow A_i(R_i^a)\ge 0.
\end{eqnarray*}
We thus have $A_i(R_i^b)>0$, hence we must have $R_i^b=R_{max,i}$, which contradicts with the assumption $R_i^a>R_i^b$ since $R_a^i$ is upper-bounded by $R_{max,i}$. Hence we have $R_i^a \le R_i^b$. Recall the assumption that $\prod_{j\in{\cal N}} \rho_j^a\le \prod_{j\in{\cal N}} \rho_j^b$ and $\rho_i^a>\rho_i^b$, it follows from~\eqref{eq:q}, \eqref{eq:p}, \eqref{eq:tau} and $\rho_i=1-\tau_i$ that $\rho_i^a<\rho_i^b$, which contradicts the assumption $\rho_i^a>\rho_i^b$. We have thus shown that $\rho_j^a\le \rho_j^b, \forall j\in{\cal N}$.

It then holds that $\tau_j^a\ge \tau_j^b, \forall j\in{\cal N}$, it then follows from~\eqref{eq:markov_sol_eq} that $e_j(R_j^a)\le e_j(R_j^b)$, leading to $R_j^a\le R_j^b$. If $R_j^a= R_j^b, \forall j\in{\cal N}$, then recall Theorem~\ref{theorem:uniqueness_sol_markov}, it holds that the two equilibria $\mathbf{a}$ and $\mathbf{b}$ are identical, which contradicts to the assumption that they are distinct equilibria. Hence there must exist $l\in{\cal N}$ such that $R_l^a<R_l^b$. We distinguish the following two cases
\begin{compactitem}
\item \textit{Case 1:} $R_l^b=R_{max,l}$. It holds that $A_i(R_l^b)\ge 0$. If $R_l^a=R_{min,l}$, then $A_l(R_l^a)\le 0$. If $R_l^a>R_{min,l}$, then $A_l(R_l^a)=0$;
\item \textit{Case 1:} $R_l^b<R_{max,l}$. Since $R_l^b>R_l^a\ge R_{min,i}$, it holds that $A_l(R_l^b)=0$. If $R_l^a=R_{min,l}$, then $A_l(R_l^a)\le 0$. If $R_l^a>R_{min,l}$, then $A_l(R_l^a)=0$.
\end{compactitem}
In both cases, we have $A_l(R_l^b)\ge A_l(R_l^a)$.

On the other hand, recall the atomic assumption, we can derive from~\eqref{eq:a2} that
\begin{eqnarray*}
A_l(R_l)= \frac{g_l'(R_l)}{g_l(R_l)}+\frac{\prod_{j\in{\cal N},j\ne l}\rho_j}{1-\rho_i}\frac{\partial \tau_l}{\partial q_l}e_l'(R_l).
%\label{eq:a2}
\end{eqnarray*}
For each $i\in{\cal N}$, we have shown that $R_i^a\le R_i^b$ (with $R_l^a<R_l^b$), $\rho_i^a\le \rho_i^b$, $\forall i\in{\cal N}$. Hence $\frac{\prod_{j\in{\cal N},j\ne l}\rho_j^a}{1-\rho_i^a}\le \frac{\prod_{j\in{\cal N},j\ne l}\rho_j^b}{1-\rho_i^b}$. It also follows from $\rho_i=1-\tau_i$ and~\eqref{eq:tau} that $q_i^a\le q_i^b$. Noticing that $e_l'(R_l)>0$ and is monotonously increasing in $R_l$ (Lemma~\ref{lemma:channel_error_rate}) and that $\tau_l$ is concave and monotonously decreasing in $q_l$ meaning that $\frac{\partial \tau_l}{\partial q_l}>0$ and is monotonously decreasing in $q_l$, it holds that
$$\left.\frac{\prod_{j\in{\cal N},j\ne l}\rho_j}{1-\rho_i}\frac{\partial \tau_l}{\partial q_l}e_l'(R_l)\right|^a>\left.\frac{\prod_{j\in{\cal N},j\ne l}\rho_j}{1-\rho_i}\frac{\partial \tau_l}{\partial q_l}e_l'(R_l)\right|^b$$
Moreover, since $g_l(R_l)$ is concave in $R_l$, $\log g_l(R_l)$ is concave in $R_l$. Hence $\frac{g_l'(R_l)}{g_l(R_l)}=\frac{\partial\log g_l(R_l)}{\partial R_l}$ is decreasing in $R_l$. Therefore, we have $A_l(R_l^b)<A_l(R_l^a)$, which contradicts with $A_l(R_l^b)\ge A_l(R_l^a)$. This contradiction completes our proof on the equilibrium uniqueness.
\end{proof}

The following theorem follows naturally, establishing the convergence to the unique equilibrium of $G_R(\mathbf{W})$ under any initial state $(R_i(0))_{i\in{\cal N}}$ in an \textit{asynchronous} manner.

\begin{theorem}
Assume that the players of $G_R(\mathbf{W})$ follows the best response dynamics from some initial rate vector $\mathbf{R}(0)\triangleq (R_i(0))_{i\in{\cal N}}$,
i.e., from time to time, an asynchronous update step is taken where some user $i\in{\cal N}$ updates its strategy from $R_i(n)$ to $R_i(n+1)=B_i(R_i(n))$, and the sequence of players doing the
update steps can be arbitrary as long as the number of steps between consequent updates of every individual player is bounded. Then, $\lim_{n\rightarrow\infty} \mathbf{R}(n)=\mathbf{R}^*$, where $\mathbf{R}^*$ is the unique NE of $G_R(\mathbf{W})$.
\end{theorem}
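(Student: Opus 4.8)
The plan is to establish convergence of asynchronous best-response dynamics by exploiting the monotonicity structure already proved for the best-response map, together with the existence of a \emph{unique} equilibrium. The key observation is that Lemma~\ref{lemma:increasing_b} shows $\mathbf{B}(\mathbf{R})$ is non-decreasing in each component, so the game is a \emph{supermodular} (monotone) game on the compact lattice $\prod_{i\in{\cal N}}[R_{min,i},R_{max,i}]$. For such games the standard machinery of Topkis and Tarski applies: monotone best-response iteration started from the smallest point $\mathbf{R}^{\min}\triangleq(R_{min,i})_{i\in{\cal N}}$ produces a monotonically non-decreasing, bounded sequence converging to the least equilibrium, while iteration started from $\mathbf{R}^{\max}\triangleq(R_{max,i})_{i\in{\cal N}}$ produces a monotonically non-increasing sequence converging to the greatest equilibrium. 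Since Theorem~\ref{theorem:ne_uniqueness_rate_game} guarantees the equilibrium is unique, these two extremal limits coincide with the single NE $\mathbf{R}^*$.

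First I would set up the two extremal \emph{synchronous} trajectories. Define $\mathbf{R}^{(0,L)}=\mathbf{R}^{\min}$ and $\mathbf{R}^{(n+1,L)}=\mathbf{B}(\mathbf{R}^{(n,L)})$, and symmetrically $\mathbf{R}^{(0,U)}=\mathbf{R}^{\max}$ with $\mathbf{R}^{(n+1,U)}=\mathbf{B}(\mathbf{R}^{(n,U)})$. Because $R_{min,i}\le B_i(R_{-i})$, the lower trajectory is non-decreasing; because $B_i(R_{-i})\le R_{max,i}$, the upper trajectory is non-increasing; monotonicity of $\mathbf{B}$ preserves these orderings inductively. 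Each sequence is monotone and bounded, hence convergent, and by continuity of $\mathbf{B}$ (inherited from continuity of $A_i$ and the best-response formula~\eqref{eq:best_response}) each limit is a fixed point, i.e.\ an equilibrium. Uniqueness forces both limits to equal $\mathbf{R}^*$.

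The main work is the \emph{asynchronous sandwich} argument. I would show by induction that any asynchronous trajectory $\mathbf{R}(n)$ started from an arbitrary $\mathbf{R}(0)\in\prod_i[R_{min,i},R_{max,i}]$ stays trapped between suitably time-indexed copies of the two extremal trajectories. Concretely, let $k(n)$ count the number of \emph{complete rounds} of updates (a complete round being an interval in which every player has updated at least once, which exists by the boundedness hypothesis on inter-update gaps). Using the monotonicity of $\mathbf{B}$ and the fact that an un-updated coordinate simply retains its value, one proves $\mathbf{R}^{(k(n),L)}\le \mathbf{R}(n)\le \mathbf{R}^{(k(n),U)}$ componentwise for all $n$ large enough. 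As $n\to\infty$ we have $k(n)\to\infty$, so both bounding sequences converge to $\mathbf{R}^*$, and the squeeze yields $\lim_{n\to\infty}\mathbf{R}(n)=\mathbf{R}^*$.

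The step I expect to be the main obstacle is making the asynchronous sandwich fully rigorous, because asynchronous updates break the clean coordinatewise comparison available in the synchronous case: at any instant only one coordinate moves while the others are frozen at stale values, so one must carefully argue that freezing a coordinate that already lies within the current $[\mathbf{R}^{(k,L)},\mathbf{R}^{(k,U)}]$ bracket cannot escape the bracket after the next update, and that every player updating within a bounded window guarantees the bracket index $k(n)$ advances. This is precisely the content of the Bertsekas--Tsitsiklis asynchronous convergence theorem for monotone maps, so I would either invoke that result directly, citing~\cite{Bertsekas07}, or reproduce its short induction on rounds; the remaining details (boundedness, monotonicity, continuity) are already supplied by Lemmas~\ref{lemma:increasing_b} and~\ref{lemma:monotonicity_a} and Theorem~\ref{theorem:ne_uniqueness_rate_game}.
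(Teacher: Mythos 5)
Your proposal is correct and follows essentially the same route as the paper: monotone best-response iteration (Lemma~\ref{lemma:increasing_b}) from the two extremal rate vectors $(R_{min,i})_{i\in{\cal N}}$ and $(R_{max,i})_{i\in{\cal N}}$, convergence of each bounding sequence to the unique equilibrium, and a sandwich argument for an arbitrary initial vector. The only difference is bookkeeping: the paper runs the two bounding sequences under the \emph{same asynchronous update schedule} as the arbitrary trajectory, which makes the componentwise sandwich $\mathbf{R}_{min}(n)\le\mathbf{R}(n)\le\mathbf{R}_{max}(n)$ immediate by induction, whereas you use synchronous extremal iterates together with a complete-rounds counter in the style of the Bertsekas--Tsitsiklis asynchronous convergence theorem; both versions are valid.
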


\begin{proof}
First, consider an arbitrary sequence of update steps commencing from an initial vector $\mathbf{R}(0)=(R_{min,i}, i\in{\cal N})$, and denote the resulting sequence of rate vectors by
$\mathbf{R}_{min}(n)$. Obviously, for any player $i$, the first time it updates its strategy will be a nondecreasing update. In light of Lemma~\ref{lemma:increasing_b}, it follows by induction that all updates
must be nondecreasing, i.e., $\mathbf{R}_{min}(n)$ is a nondecreasing sequence. Since $\mathbf{R}_{min}(n)\le \mathbf{R}_{max}$ is bounded, it must converge to a limit. Due to the
continuity of the best-response function, this limit must be its (unique) fixed point $\mathbf{R}^*$.

In a similar manner, consider a sequence of best-response updates from an initial vector of $\mathbf{R}(0)=(R_{max,i}, i\in{\cal N})$. By the same analysis, Lemma~\ref{lemma:increasing_b} implies that all the updates in
the sequence must be non-increasing, and the sequence must therefore converge to $\mathbf{R}^*$.

Finally, consider a sequence of best-response updates $\mathbf{R}(n)$ commencing from an arbitrary initial data rate vector $\mathbf{R}(0)$. From Lemma~\ref{lemma:increasing_b}, it follows that $\mathbf{R}_{min}(n)\le \mathbf{R}(n)\le \mathbf{R}_{max}(n)$, provided that for every $n$, the update step is performed by the same flow in all three sequences. Since, as established above, $\mathbf{R}_{min}(n)$ and $\mathbf{R}_{max}(n)$ converge to $\mathbf{R}^*$, it follows that the same is true for $\mathbf{R}(n)$ as well.
\end{proof}

\subsection{Equilibrium Efficiency Analysis}

Having derived the unique NE of $G_R(\mathbf{W})$, we proceed to evaluate the efficiency of the unique NE compared with the global optimal by quantifying the \textit{Price of Anarchy} (PoA), [21], defined as the ratio between the optimal global utility and the system utility achieved at the NE. In other words, PoA quantifies the efficiency loss due to selfish competition compared to cooperation. Due to the complexity of the problem and in order to make our analysis tractable, we focus on the symmetrical and unconstraint scenario where $G_i(R_i)$ are identical for all players and $R_{min}$ ($R_{max}$, respectively) are sufficiently small (large) so that the unique equilibrium satisfies $\frac{\partial S_i}{\partial R_i}=0$. However, as shown in Section~\ref{sec:simu} by simulation, we observe the same results for general cases.

We first show in Theorem~\ref{theorem:ne_sym_rate} that in the symmetrical case, the equilibrium is also symmetrical such that it holds that $R_i^*=R_j^*, \rho_i^*=\rho_j^*, \forall i,j\in{\cal N}$.

\begin{theorem}
In the symmetrical case, the equilibrium is also symmetrical, i.e., $R_i^*=R_j^*, \rho_i^*=\rho_j^*, \forall i,j\in{\cal N}$.
\label{theorem:ne_sym_rate}
\end{theorem}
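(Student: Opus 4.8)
The plan is to exploit the permutation symmetry of the game together with the uniqueness of the equilibrium already established in Theorem~\ref{theorem:ne_uniqueness_rate_game}, rather than to grind through the first-order conditions directly. First I would make precise what ``symmetrical case'' buys us: all per-player primitives coincide, i.e.\ the CW values $W_i$, the functions $\Gamma_i$, $e_i$ and $G_i$, and the bounds $R_{min,i},R_{max,i}$ are identical across $i$. Consequently the utility $S_i$ depends on the opponents' rates only through the symmetric quantity $\prod_{j\in{\cal N},j\ne i}\rho_j$ (via~\eqref{eq:q},~\eqref{eq:tau} and $\rho_i=1-\tau_i$), and on $R_i$ only through these common functions. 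This is exactly the statement that $G_R(\mathbf{W})$ is a \emph{symmetric} game.

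Next I would formalize the invariance. For any permutation $\pi$ of ${\cal N}$, let $\mathbf{R}^\pi$ denote the relabeled profile in which player $i$ adopts the rate that player $\pi^{-1}(i)$ used in $\mathbf{R}$. Using the symmetry just described, the utility of player $i$ under $\mathbf{R}^\pi$ coincides with the utility of player $\pi^{-1}(i)$ under $\mathbf{R}$; hence the best response of $i$ to $\mathbf{R}^\pi_{-i}$ equals the best response of $\pi^{-1}(i)$ to $\mathbf{R}_{-\pi^{-1}(i)}$. Therefore, if $\mathbf{R}^*$ is an equilibrium, then so is $(\mathbf{R}^*)^\pi$ for every $\pi$.

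The conclusion then follows from uniqueness. Since $G_R(\mathbf{W})$ admits a unique equilibrium (Theorem~\ref{theorem:ne_uniqueness_rate_game}) and every relabeling $(\mathbf{R}^*)^\pi$ is also an equilibrium, we must have $(\mathbf{R}^*)^\pi=\mathbf{R}^*$ for all $\pi$. Choosing $\pi$ to be the transposition of an arbitrary pair $(i,j)$ yields $R_i^*=R_j^*$, establishing the symmetry of the rate profile. The symmetry of $\rho$ is then inherited: identical rates feed identical $q_i$ through~\eqref{eq:q}, hence identical $\tau_i=\Gamma_i(q_i)$ via~\eqref{eq:tau} (the $\Gamma_i$ themselves being identical), and finally identical $\rho_i=1-\tau_i$, giving $\rho_i^*=\rho_j^*$ for all $i,j\in{\cal N}$.

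The main obstacle is the second step: one must verify carefully that relabeling an equilibrium again produces an equilibrium, which hinges on the invariance of $S_i$ under the simultaneous permutation of player indices. Because the opponent dependence enters only through the symmetric product $\prod_{j\in{\cal N},j\ne i}\rho_j$ and all primitives are common, this invariance does hold, after which the argument is routine. I note that a more computational alternative would assume $R_i^*<R_j^*$ at an interior equilibrium and derive a contradiction directly from the conditions $A_i(R_i^*)=A_j(R_j^*)=0$ using the monotonicity of $A_i$ from Lemma~\ref{lemma:monotonicity_a}; however, this essentially replays the uniqueness proof and is markedly less transparent than the symmetry argument above.
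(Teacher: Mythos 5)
Your proposal is correct and takes essentially the same route as the paper: the paper's proof likewise argues that swapping $R_i^*$ and $R_j^*$ would produce another equilibrium, contradicting the uniqueness established in Theorem~\ref{theorem:ne_uniqueness_rate_game}, and then deduces $\rho_i^*=\rho_j^*$ from the identical $e_i$ via the stationarity equations. You merely make explicit the permutation-invariance verification that the paper leaves implicit.
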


\begin{proof}
We first prove $R_i^*=R_j^*, \forall i,j\in{\cal N}$. Otherwise if there exist $i,j$ such that $R_i^*\ne R_j^*$. Since the players are symmetrical, by swapping $R_i^*$ and $R_j^*$ we obtain another equilibrium. This clearly contradicts with Theorem~\ref{theorem:ne_uniqueness_rate_game} on the uniqueness of the equilibrium.

Noticing that $e_i$ are identical among players, it then follows immediately from 18 that $\tau^*_i=\tau_j^*, \forall i,j\in{\cal N}$, leading to $\rho_i^*=\rho_j^*, \forall i,j\in{\cal N}$.
\end{proof}

Armed with Theorem~\ref{theorem:ne_sym_rate}, the equilibrium can thus be solved numerically by imposing $A_i(R_i)=0$ and combining the equations~\eqref{eq:q}, \eqref{eq:tau} and \eqref{eq:p}. Specifically we study a practical setting where users use $M$-ary QAM modulation of which the bit error rate is
\begin{eqnarray}
P_e\simeq 4\left(1-\frac{1}{\sqrt{M}}\right)Q\left(\sqrt{\frac{3\alpha^2\log_2(M)}{(M-1)}\frac{E_b}{N_0}}\right).
\label{eq:qam}
\end{eqnarray}

\begin{figure}[htbp]
\centering
\includegraphics[width=4.5cm, angle=270]{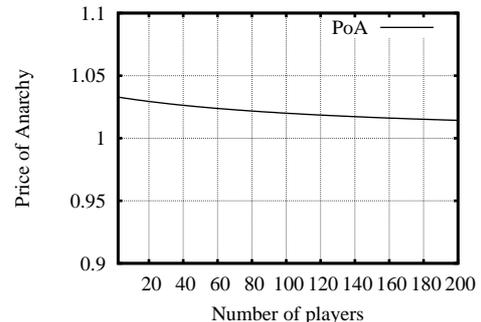}
\vspace{-0.3cm}
\caption{Price of Anarchy as a function of $N$}
\label{fig:poa1}
\end{figure}

In Figure~\ref{fig:poa1}, we plot the numerical result on the PoA with a typical WLAN setting $M=64$, $\alpha=1$, $SNR=15$dB. We also report the similar results with other parameter settings. From Figure~\ref{fig:poa1} and recall the results on the the performance anormaly in 802.11 WLANs where a user operating on low data rate can drastically degrade the overall network performance, we report the following desirable properties of the adapted TFT strategy at the rate adaptation level:
\begin{compactitem}
\item The PoA is close to $1$ in the studied cases, signifying that the adapted TFT strategy can actually lead the selfish users to a reasonable efficient equilibrium, which is shown to be unique, despite the user selfishness and the lack of coordination among users.
\item The scalability of the adapted TFT strategy is also demonstrated in that the PoA decreases with $N$ and tends to around $1.02$.
\end{compactitem}

We conclude this section with a discussion on the distributed implementation of the best-response strategy to converge to the unique equilibrium.
To compute the best-response strategy, each user should locally compute~\eqref{eq:a2}. To that end, each player can estimate the SNR by adding a feedback in the ACK packets indicating the power at the receiver and compare it with the emission power to further derive $e_i(R_i)$ and $G_i(R_i)$. Besides, each player can calculate $\prod_{j\in{\cal N},j\ne i}\rho_j$ based on local observation by using the approach proposed in \cite{Heusse05}, \cite{Chen09} based on observing the average number of consecutive idle slots between two transmission attempts. Since each user knows its own strategy, he can compute $A_i(R_i)$ in \eqref{eq:a2} locally. Hence, the NE can be achieved based on solely local observation.
%As discussed in Section~\ref{sec:intro}, one of our major motivation of studying non-cooperative scenario is the robustness and scalability in distributed environments.

\section{Equilibrium Analysis of Higher-level medium access Game and the Global Game}

In this section, we investigate the higher-level medium access game under the adapted TFT framework given that once players choose their CW strategy, they play the lower-level rate adaptation game by operating on the resulting unique equilibrium of the game. In the following analysis, we first show the existence of equilibrium for the medium access game, which is also the system equilibrium. Since there may exist multiple equilibria, we introduce the equilibrium refinement process to find the most efficient one. We then develop a distributed algorithm to converge to the refined equilibrium. We also study the efficiency of this equilibrium.

Before delving into the equilibrium analysis, the following lemma studies a fundamental property by showing that given any strategy of other players, any player $i$ is always better off by decreasing its CW value $W_i$.

\begin{lemma}
Under the non-atomic game assumption, given any strategy of other player, any player $i$ gets higher utility by decreasing its CW value $W_i$.
\label{lemma:less_cw}
\end{lemma}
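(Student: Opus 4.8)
The plan is to reduce the claim to a clean monotonicity statement about the transmission probability $\tau_i$ as a function of $W_i$, exploiting the non-atomic assumption to freeze everything in player $i$'s utility except its own transmission probability and data rate. First I would invoke the non-atomic assumption~\eqref{eq:atomic}, under which $p_i=\prod_{j\in{\cal N},j\ne i}\rho_j$ (and hence both $\prod_{j\in{\cal N},j\ne i}\rho_j$ and $1-\prod_{j\in{\cal N}}\rho_j$) is independent of player $i$'s own strategy. Writing $1-\rho_i=\tau_i$ and $G_i(R_i)=(1-e_i(R_i))R_i$, the effective throughput collapses to
\begin{eqnarray*}
S_i\simeq \frac{\prod_{j\in{\cal N},j\ne i}\rho_j}{1-\prod_{j\in{\cal N},j\ne i}\rho_j}\,\tau_i\,G_i(R_i)\triangleq C\,\tau_i\,G_i(R_i),
\end{eqnarray*}
where the factor $C>0$ depends on neither $W_i$ nor $R_i$.

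Next I would recall that the medium-access utility is $\widehat{S}_i(W_i,W_{-i})=S_i(\mathbf{W},\mathbf{R^*(W)})$, i.e. it is evaluated at the rate-adaptation equilibrium. Under the non-atomic assumption player $i$'s equilibrium rate is simply its best response against the frozen $p_i$, so
\begin{eqnarray*}
\widehat{S}_i(W_i,W_{-i})=C\max_{R_{min,i}\le R_i\le R_{max,i}}\tau_i(W_i,R_i)\,G_i(R_i),
\end{eqnarray*}
where, by~\eqref{eq:q} and the property $\tau_i=\Gamma_i(q_i)$, one has $\tau_i(W_i,R_i)=\Gamma_i\bigl(1-(1-p_i)(1-e_i(R_i))\bigr)$, so that the argument $q_i$ depends on $R_i$ alone (with $p_i$ held fixed). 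The key technical step is then the observation that for a \emph{fixed} value of $q_i$ the map $W_i\mapsto\Gamma_i(q_i)$ is strictly decreasing: from~\eqref{eq:gamma} the denominator $W_i\bigl(1+q_i\sum_{l=0}^{m-1}(2q_i)^l\bigr)+1$ is strictly increasing in $W_i$ for every $q_i\ge 0$, so a smaller $W_i$ yields a strictly larger $\tau_i$ at the same data rate.

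Finally I would close the argument by a pointwise-domination of the maximands. Since $q_i$, and hence the value at which $\Gamma_i$ is evaluated, depends only on $R_i$ and the frozen $p_i$, for any $W_i'<W_i$ and every admissible $R_i$ we get $\tau_i(W_i',R_i)>\tau_i(W_i,R_i)$; because $G_i(R_i)>0$ on $[R_{min,i},R_{max,i}]$, this gives $\tau_i(W_i',R_i)G_i(R_i)>\tau_i(W_i,R_i)G_i(R_i)$ pointwise in $R_i$. Taking the maximum over $R_i$ preserves the strict inequality, hence $\widehat{S}_i(W_i',W_{-i})>\widehat{S}_i(W_i,W_{-i})$, which is exactly the claim. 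The main obstacle here is conceptual rather than computational: one must ensure that re-optimizing the data rate when $W_i$ changes does not destroy the monotonicity, and the max-of-a-pointwise-dominated-family argument is precisely what handles this; the other point demanding care is that the frozen quantities $p_i$ and $C$ may legitimately be treated as constants in player $i$'s strategy, which is exactly where the non-atomic assumption is essential.
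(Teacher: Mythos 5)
Your proof is correct and takes essentially the same route as the paper: both freeze the other players' states via the non-atomic assumption, use that $\tau_i$ is strictly decreasing in $W_i$ at any fixed data rate (hence $S_i$ increases since the remaining factors are constants), and then handle the re-optimization of the rate. The only cosmetic difference is that the paper compares the two CW values at the single rate $R_i'$ and invokes the best-response property once, whereas you establish pointwise domination of the maximand over all $R_i$ and take the maximum; these are two phrasings of the same argument.
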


\begin{proof}
Given any $W_{-i}$ and any $W_i<W_i'$, we need to show that $\widehat{S}_i(W_i,W_{-i})>\widehat{S}_i(W_i',W_{-i})$. To this end, let $(q_j,\tau_j,R_j)_{j\in{\cal N}}$ denote the system state at $(W_i,W_{-i})$ with $(R_j)_{j\in{\cal N}}$ being the equilibrium of the lower-level rate adaptation game and let $(q_j',\tau_j',R_j')_{j\in{\cal N}}$ denote the system state at $(W_i',W_{-i})$ with $(R_j')_{j\in{\cal N}}$ being the equilibrium of the lower-level rate adaptation game, it follows from the assumption of the non-atomic game that the impact of changing $W_i$ and $R_i$ on the strategy and state of other players is negligible. Hence $(q_j,\tau_j,R_j)_{j\in{\cal N}, j\ne i}=(q_j,\tau_j,R_j)_{j\in{\cal N}, j\ne i}$.

Now let $i$ operate on data rate $R_i'$ at $(W_i,W_{-i})$, it follows from~\eqref{eq:q} and~\eqref{eq:p} that $q_i<q_i'$, further leading to $\tau_i>\tau_i'$. Noticing that $\tau_j'=\tau_j, \forall j\in{\cal N}, j\ne i$, we have $\widehat{S}_i(W_i',W_{-i})<S_i(W_i, R_i')$. Recall the definition that given the CW setting $(W_i',W_{-i})$, $R_i$ is the best response to $R_{-i}$, it holds that $\widehat{S}_i(W_i,W_{-i})\le S_i(W_i, R_i')$. Therefore, we have $\widehat{S}_i(W_i,W_{-i})>\widehat{S}_i(W_i',W_{-i})$.
\end{proof}

Lemma~\ref{lemma:less_cw}, whose proof is detailed in xx, justifies the proposed adapted TFT strategy as it is motivated by that smaller CW value leads to higher effective throughput.

\subsection{System Equilibrium Analysis}

The proposed adapted TFT framework can ensure that all users operate on the same CW value, i.e., $W_j=W, \forall j\in{\cal N}$. For any player $i\in{\cal N}$, it can be noted that:
\begin{compactitem}
\item When $\mathbf{W}=\mathbf{0}$\footnote{Throughout this section, to make the analysis concise, we use boldface letters $\mathbf{W}$ to denote a $N$-element vector with each element being $W$, i.e., $\mathbf{W}\triangleq (W_j=W)_{j\in{\cal N}}.$}, meaning that any player $j$ transmits its packets without waiting, we have $\widehat{S}_i(\mathbf{W})=0$\footnote{Recall that $\widehat{S}_i$, defined in Definition~\ref{def:medium access_game}, denotes the utility of $i$ for the higher-level medium access game.};
\item When $\mathbf{W}\rightarrow\mathbf{\infty}$, noticing~\eqref{eq:tau} that $\tau_i\le \frac{2}{W+1}\rightarrow 0$, it holds that $\widehat{S}_i(\mathbf{W})\rightarrow 0$.
\end{compactitem}
Hence there exists $0<W^*_i<\infty$ such that $\widehat{S}_i(\mathbf{W_i^*})$ is maximized.

\begin{definition}
For any player $i\in{\cal N}$, let $\widehat{{\cal W}}_i$ denote the set containing all CW values $W_i$ such that for any $W\le W_i$, it holds that
$\widehat{S}_i(\mathbf{W})\le \widehat{S}_i(\mathbf{W_i})$.
\label{def:w_i}
\end{definition}

Recall the adapted TFT strategy, Definiton~\ref{def:w_i} indicates:
\begin{compactitem}
\item when operating at any $W_i\in\widehat{{\cal W}}_i$, player $i$ has no incentive to decrease its CW value;
\item when operating at some $W_i'\notin\widehat{{\cal W}}_i$, player $i$ can always find $W_i<W_i'$ such that $\widehat{S}_i(\mathbf{W_i})>\widehat{S}_i(\mathbf{W_i'})$, i.e., $i$ is always better off by deviating from $W_i'$ to $W_i$.
\end{compactitem}

\begin{example}
Let us see an example to clarify the above definition and notation. Consider a scenario where $\widehat{S}_i(\mathbf{0})=0$, $\widehat{S}_i(\mathbf{1})=0.1$, $\widehat{S}_i(\mathbf{2})=0.15$, $\widehat{S}_i(\mathbf{3})=0.14$, $\widehat{S}_i(\mathbf{4})=0.16$ and $\widehat{S}_i(\mathbf{W})<\widehat{S}_i(\mathbf{4})$, for any $W\ge 5$. In this example, $\widehat{{\cal W}}_i=\{0,1,2,4\}$. When operating on $W=3$ and $W\ge 5$, player $i$ can increase its utility by decreasing $W$ to $2$ and $4$, respectively (note that under the adapted TFT strategy, other players will also decrease their CW values correspondingly).
\end{example}

Since $\widehat{S}_i(\mathbf{0})=0$ and $\widehat{S}_i(\mathbf{W})\ge 0$ for $W\ge 0$, we have $0\in\widehat{{\cal W}}_i$ and $1\in\widehat{{\cal W}}_i$. Recall that $W_i^*$ is the maximizer of $S_i(\mathbf{W_i^*})$, it follows from Definition~\ref{def:w_i} that $W_i^*\in \widehat{{\cal W}}_i$. Let $\mathcal{\widehat{W}}=\bigcap_{i\in{\cal N}} \widehat{{\cal W}}_i$, $\mathcal{\widehat{W}}$ is not empty by containing at least $\mathbf{0,1}$. We next show in Theorem~\ref{theorem:ne_existence_medium access} that any strategy profile $\mathbf{W}$ with $W\in \mathcal{\widehat{W}}$ is an equilibrium of the higher-level medium access game $G_M$ and that any other strategy profile cannot be an equilibrium.

\begin{theorem}
Under the adapted TFT framework, any strategy profile $\mathbf{W}$ with $W\in \mathcal{\widehat{W}}$ is an equilibrium of $G_M$. Any other strategy profile cannot be an equilibrium of $G_M$.
\label{theorem:ne_existence_medium access}
\end{theorem}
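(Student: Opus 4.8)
The plan is to establish the two claims of Theorem~\ref{theorem:ne_existence_medium access} separately, relying on the definition of $\widehat{{\cal W}}_i$ and the adapted TFT reaction rule. The central observation is that under the TFT framework, a unilateral deviation by player $i$ to a CW value $W_i'$ is not really a one-sided move: if $i$ decreases its CW, the TFT strategy forces every other player to match the smallest CW value, so the relevant comparison for deviations is always between symmetric profiles $\mathbf{W}$ and $\mathbf{W_i}$. This is exactly the comparison encoded in Definition~\ref{def:w_i}, so the theorem should follow almost directly once the deviation structure is made precise.

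First I would prove that any $\mathbf{W}$ with $W\in\mathcal{\widehat{W}}$ is an equilibrium. Fix such a profile and consider an arbitrary player $i$. A profitable deviation can only be a CW value $W_i'$ with $\widehat{S}_i(\mathbf{W_i'})>\widehat{S}_i(\mathbf{W})$. If $W_i'<W$, then by the TFT reaction the resulting symmetric profile is $\mathbf{W_i'}$; but $W\in\widehat{{\cal W}}_i$ means precisely that $\widehat{S}_i(\mathbf{W'})\le\widehat{S}_i(\mathbf{W})$ for all $W'\le W$, so no smaller CW is profitable. If $W_i'>W$, then by Lemma~\ref{lemma:less_cw} player $i$ is strictly worse off by increasing its CW (other players' CW values being unchanged at $W$ under TFT since $i$ did not undercut them), so no larger CW is profitable either. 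Hence $i$ has no profitable unilateral deviation, and since $i$ was arbitrary, $\mathbf{W}$ is an equilibrium.

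Next I would show that no profile with $W\notin\mathcal{\widehat{W}}$ can be an equilibrium. If $W\notin\mathcal{\widehat{W}}=\bigcap_{i\in{\cal N}}\widehat{{\cal W}}_i$, then there is some player $i$ with $W\notin\widehat{{\cal W}}_i$. By the second bullet following Definition~\ref{def:w_i}, such a player can find $W_i<W$ with $\widehat{S}_i(\mathbf{W_i})>\widehat{S}_i(\mathbf{W})$, and because TFT makes all players follow $i$ down to the smaller CW, the realized post-deviation utility of $i$ is indeed $\widehat{S}_i(\mathbf{W_i})$. Thus $i$ has a strictly profitable deviation, so $\mathbf{W}$ is not an equilibrium. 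The only delicate point—and the one I expect to be the main obstacle—is to argue carefully that the TFT reaction rule legitimately turns a single player's CW decrease into the symmetric comparison, so that the set-theoretic conditions of Definition~\ref{def:w_i} are the right equilibrium conditions; this requires invoking the ``react by setting $W_i=\min_{j}W_j$'' rule and noting that an increase in $W_i$ triggers no reaction from others (covered by Lemma~\ref{lemma:less_cw}), while a decrease triggers the matching response. Once this correspondence between deviations and symmetric profiles is pinned down, both directions reduce to directly reading off Definition~\ref{def:w_i} and Lemma~\ref{lemma:less_cw}.
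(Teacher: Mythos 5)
Your proposal is correct and follows essentially the same route as the paper: the "no incentive to increase" direction via Lemma~\ref{lemma:less_cw}, the "no incentive to decrease" direction via the TFT matching rule reducing the deviation to a comparison of symmetric profiles covered by Definition~\ref{def:w_i}, and the converse by extracting a profitable downward deviation for any $W\notin\mathcal{\widehat{W}}$ (the paper states this last part contrapositively, but it is the same argument).
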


\begin{proof}
We first show that any player $i$ has no incentive to deviate from $\mathbf{W}$ where $W\in \mathcal{\widehat{W}}$.

First, it follows straightforwardly from Lemma~\ref{lemma:less_cw} that $i$ has no incentive to increase $W_i$ from $W$. We now show that $i$ has no incentive to decrease $W_i$ from $W$, either. To this end, recall the adapted TFT framework, decreasing $W_i$ from $W$ to $W'<W$ leads to other players decrease their CW value to $W'$. The system thus operates on $\mathbf{W'}$. Recall Definition~\ref{def:w_i}, we have $\widehat{S}_i(\mathbf{W'})<\widehat{S}_i(\mathbf{W})$. Player $i$ thus has no incentive to decrease $W_i$ from $W$, either. It then follows from the definition of equilibrium that $\mathbf{W}$ is an equilibrium of $G_M$.

We then show that any other strategy profile cannot be an equilibrium of $G_M$. Otherwise, if there exists an equilibrium $\mathbf{W'}$ with $W'\notin{\mathcal{\widehat{W}}}$, then we must have
\begin{eqnarray*}
\widehat{S}_i(\mathbf{W})\le \widehat{S}_i(\mathbf{W'}), \ \forall W\le W', \ \forall i\in{\cal N}.
\end{eqnarray*}
It then holds that $W'\in{\mathcal{\widehat{W}}}$, which leads to contradiction with $W'\notin{\mathcal{\widehat{W}}}$, which concludes our proof.
\end{proof}

\begin{example}
\label{ex:ex2}
To clarify the analysis of Theorem~\ref{theorem:ne_existence_medium access} on the equilibria of $G_M$. Consider an example of $G_M$ with two players with the following utility setting:
\begin{compactitem}
\item $\widehat{S}_1(\mathbf{0})=0$, $\widehat{S}_1(\mathbf{1})=0.1$, $\widehat{S}_1(\mathbf{2})=0.15$, $\widehat{S}_1(\mathbf{3})=0.14$, $\widehat{S}_1(\mathbf{4})=0.16$ and $\widehat{S}_1(\mathbf{W})<\widehat{S}_i(\mathbf{4})$, for any $W\ge 5$;
\item $\widehat{S}_2(\mathbf{0})=0$, $\widehat{S}_2(\mathbf{1})=0.11$, $\widehat{S}_2(\mathbf{2})=0.16$, $\widehat{S}_2(\mathbf{3})=0.165$, $\widehat{S}_2(\mathbf{4})=0.163$, $\widehat{S}_2(\mathbf{5})=0.166$ and $\widehat{S}_2(\mathbf{W})<\widehat{S}_i(\mathbf{5})$, for any $W\ge 6$;
\end{compactitem}
In this example, it can be checked that $\mathcal{W}_1=\{0,1,2,4\}$ and $\mathcal{W}_2=\{0,1,2,3,5\}$; hence $\mathcal{\widehat{W}}=\{0,1,2\}$. Theorem~\ref{theorem:ne_existence_medium access} shows that    there are $3$ equilibria in this example: $\mathbf{0}$, $\mathbf{1}$ and $\mathbf{2}$. It can be easily verified that no player has incentive to deviate from these equilibria under the adapted TFT strategy and that any other strategy profile is not an equilibrium.
\end{example}

The following theorem on the equilibrium of the global joint medium access and rate control game holds naturally.

\begin{theorem}
Under the adapted TFT framework, any strategy profile $(\mathbf{W, R^*(W)})$ with $W\in \mathcal{\widehat{W}}$ and $\mathbf{R^*(W)}$ being the unique equilibrium of $G_R(\mathbf{W})$ is an equilibrium of the global joint medium access and rate control game $G$. Any other strategy profile cannot be an equilibrium of $G$.
\label{theorem:ne_global}
\end{theorem}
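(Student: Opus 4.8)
The plan is to prove the theorem by reducing it to the two structural results already established for the decomposed games, namely Theorem~\ref{theorem:ne_uniqueness_rate_game} (uniqueness of the rate-game equilibrium $\mathbf{R^*(W)}$ for each fixed CW setting) and Theorem~\ref{theorem:ne_existence_medium access} (the equilibria of $G_M$ are exactly the common-CW profiles $\mathbf{W}$ with $W\in\widehat{\mathcal{W}}$). The organizing principle is to establish the equivalence that, within the adapted TFT framework, a profile $(\mathbf{W},\mathbf{R})$ is a Nash equilibrium of the global game $G$ if and only if (i) $\mathbf{R}=\mathbf{R^*(W)}$ is the rate-game equilibrium for $\mathbf{W}$, and (ii) $\mathbf{W}$ is an equilibrium of the medium access game $G_M$. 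Once this equivalence is in place, the theorem follows immediately by combining (i) and (ii) with the two cited results.

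First I would prove the ``if'' direction: fix $W\in\widehat{\mathcal{W}}$ and $\mathbf{R}=\mathbf{R^*(W)}$, and show no player $i$ has a profitable deviation to any joint strategy $(W_i',R_i')$, splitting the deviation into three regimes. If $W_i'=W$ (a pure rate deviation), unprofitability is exactly the statement that $\mathbf{R^*(W)}$ is the equilibrium of $G_R(\mathbf{W})$. If $W_i'>W$, the medium-access TFT rule triggers no reaction from the other players, so they remain at $\mathbf{W}$ with their equilibrium rates; the best rate $i$ can pick is then its rate best response, yielding the rate-equilibrium payoff $\widehat{S}_i$ at the CW profile $(W_i',W_{-i})$ with $W_{-i}=W$, which is strictly smaller than $\widehat{S}_i(\mathbf{W})$ by Lemma~\ref{lemma:less_cw}. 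If $W_i'<W$, the TFT rule forces all other players down to $W_i'$, so the configuration moves to the common profile $\mathbf{W'}$ and the best payoff $i$ can extract is $\widehat{S}_i(\mathbf{W'})$, which by Definition~\ref{def:w_i} and $W\in\widehat{\mathcal{W}}\subseteq\widehat{\mathcal{W}}_i$ satisfies $\widehat{S}_i(\mathbf{W'})\le\widehat{S}_i(\mathbf{W})$ for every $W'\le W$. In all three regimes the deviation is non-improving.

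For the converse, I would argue contrapositively. If $(\mathbf{W},\mathbf{R})$ with $\mathbf{R}\ne\mathbf{R^*(W)}$ were an equilibrium of $G$, then in particular no player could improve by a pure rate deviation, so $\mathbf{R}$ would be a rate-game equilibrium for $\mathbf{W}$, contradicting its uniqueness (Theorem~\ref{theorem:ne_uniqueness_rate_game}). If instead $\mathbf{R}=\mathbf{R^*(W)}$ but $W\notin\widehat{\mathcal{W}}$, then by Theorem~\ref{theorem:ne_existence_medium access} $\mathbf{W}$ is not an equilibrium of $G_M$, so some player $i$ and some $W'<W$ satisfy $\widehat{S}_i(\mathbf{W'})>\widehat{S}_i(\mathbf{W})$; the joint deviation in which $i$ lowers its CW to $W'$ and adopts its rate best response (triggering the TFT reaction of the others) strictly improves $i$'s payoff, so the profile is not an equilibrium of $G$.

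The step I expect to be the main obstacle is ruling out profitable \emph{joint} deviations, simultaneously changing both $W_i$ and $R_i$, since the hierarchical decomposition only directly controls pure-rate and pure-CW deviations. The key auxiliary fact making this rigorous is that, for any fixed CW deviation $W_i'$ together with the CW reaction it induces under the TFT rule, the supremum of $i$'s payoff over all rate choices $R_i'$ equals exactly $\widehat{S}_i$ evaluated at the resulting CW configuration: under the non-atomic assumption player $i$'s rate change leaves the others' states unaffected, so $i$ playing its rate best response against the others' already-settled rate-equilibrium rates reproduces precisely the rate-equilibrium payoff $\widehat{S}_i$. This collapses every joint deviation to a pure-CW comparison in $G_M$, and the delicate point is reconciling the non-atomic treatment of rate changes with the fact that CW changes \emph{do} propagate to the other players through the TFT reaction.
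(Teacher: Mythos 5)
Your proposal follows essentially the same route as the paper: decompose deviations into pure-rate deviations (handled by the uniqueness of the $G_R(\mathbf{W})$ equilibrium), upward CW deviations (handled by Lemma~\ref{lemma:less_cw}, since the TFT rule does not react to an increase), and downward CW deviations (handled by Definition~\ref{def:w_i} after the TFT reaction drags everyone to the smaller common value), and your observation that a joint deviation $(W_i',R_i')$ collapses to a pure-CW comparison because the rate best response against the others' unchanged states recovers $\widehat{S}_i$ is exactly the paper's argument via the intermediate point $(W_i^*,R_i')$ under the non-atomic assumption.

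There is one small gap in your converse. The theorem's second claim is that \emph{any} other strategy profile fails to be an equilibrium, and a strategy profile of $G$ may assign \emph{different} CW values to different players. Your contrapositive only treats profiles of the form $(\mathbf{W},\mathbf{R})$ with a common CW value $W$, splitting into the two failure modes $\mathbf{R}\ne\mathbf{R^*(W)}$ and $W\notin\widehat{\mathcal{W}}$; the quantity $\widehat{S}_i(\mathbf{W})$ and the set $\widehat{\mathcal{W}}$ are only defined for common-CW profiles, so neither case covers a heterogeneous CW assignment. The paper disposes of this case in one line: under the TFT framework any player whose CW exceeds $\min_j W_j$ strictly gains by lowering it (Lemma~\ref{lemma:less_cw}), so no heterogeneous-CW profile can be an equilibrium. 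Adding that observation as a preliminary step of your converse closes the gap; everything else in your argument matches the paper's proof.
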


\begin{proof}
We first show that any player $i$ has no incentive to deviate from $W^*$. Consider that $i$ deviates from $(W_i^*, R_i^*(\mathbf{W}))$ to $(W_i', R_i')$ with $W_i\ne W_i^*$. Recall the non-atomic game assumption, by operating on $R_i'$ and decreasing from $W_i'$ to $W_i^*$, $\tau_i$ increases and thus $\rho_i$ decreases while $\rho_j$ remains the same for $j\ne i$. Hence $S_i$ increases by switching  from $(W_i', R_i')$ to $(W_i^*, R_i')$. Therefore, player $i$ has no incentive to increase $W_i$. Moreover, player $i$ has no incentive to decrease $W_i$ to some $W_i''<W^*$ since under the adapted TFT framework, this will push other players to $W_i''$ and the system will converge to $(\mathbf{W_i'', R^*(W_i'')})$, where $\mathbf{R^*(W_i'')}$ is the unique equilibrium of $G_M(\mathbf{W_i''})$. As shown in Theorem~\ref{theorem:ne_existence_medium access}, the throughput of $i$ decreases.

Moreover, since $\mathbf{R^*(W)}$ is the unique equilibrium of $G_R(\mathbf{W})$, any player $i$ has no incentive to unilaterally change $R_i^*$, either.

We then show that any other strategy profile cannot be a system equilibrium. Firstly, under the adapted TFT strategy, at any equilibrium, the CW values are identical for all players, otherwise it follows from Lemma~\ref{lemma:less_cw} that the players with higher CW values have incentive to decrease their CW values. Second, the data rate profile must be the unique equilibrium of $G_R$ under the correspondent CW setting. Given the above observation, it holds that the CW setting of any system equilibrium must be an equilibrium of $G_M$, otherwise if $\mathbf{(W,R^*(W))}$ with $W>W^*$ is a system equilibrium, then for any player $i$, by deviating from $W$ to $W^*$, the system will be dragged to $(\mathbf{W^*, R^*(W^*)})$, where it enjoys a higher throughput. It then follows from Theorem~\ref{theorem:ne_existence_medium access} that any strategy profile with $W>W^*$ cannot be a system equilibrium.
\end{proof}

Since $\mathbf{0},\mathbf{1}\in\mathcal{\widehat{W}}$, Theorem~\ref{theorem:ne_global} shows that there always exists at least two equilibria in $G_M$. Generally speaking, among multiple equilibria, some are not desirable from the system's perspective. This can be illustrated by reexamining Example~\ref{ex:ex2} where among the three equilibria, $\mathbf{2}$ is the most efficient one while $\mathbf{0}$ corresponds to system collapse. To address this challenge, a natural method is to remove those less efficient equilibria to achieve a desirable outcome. This is achieved by the \textit{equilibrium refinement}, explored by the following subsection to find the most favorable equilibrium at the system's perspective and to approach the refined equilibrium.

\subsection{Equilibrium Refinement}

We introduce three criteria in the equilibrium refinement process: \emph{fairness}, \emph{Pareto optimality} and \emph{system efficiency}.

\textit{Fairness:} an equilibrium is fair if the system resource is allocated fairly among users. In $G$, all the equilibria $(\mathbf{W, R^*(W)})$ achieve fairness among players in the sense that the players converge to the same CW value and they occupy the channel for the same amont of time at any equilibrium. Such user fairness is inherently enforced by the proposed adapted TFT framework.

\textit{Pareto optimality:} an equilibrium is Pareto optimal if we cannot find another equilibrium where the utility of every player is higher. Let $W^*\triangleq \max_{W\in\mathcal{\widehat{W}}} W$, it holds that only $(\mathbf{W^*, R^*(W^*)})$ is guaranteed to be Pareto optimal because from Definition~\ref{def:w_i}, we have
    \begin{eqnarray}
    S_i(\mathbf{W, R^*(W)})\le S_i(\mathbf{W^*, R^*(W^*)}), \ W<W^*, \ \forall i\in{\cal N}.
    \label{eq:ne_refinement}
    \end{eqnarray}
furthermore, $(\mathbf{W^*, R^*(W^*)})$ is the only Pareto optimum equilibrium among the equilibria if for at least one play $i$, the strict inequality holds in~\eqref{eq:ne_refinement}.

\textit{System efficiency:} following the same analysis on Pareto optimality, we can show that $(\mathbf{W^*, R^*(W^*)})$ achieves the maximal system throughput among the equilibria. Moreover, if for at least one play $i$, the strict inequality holds in~\eqref{eq:ne_refinement}, then the system effective throughput at $(\mathbf{W^*, R^*(W^*)})$ is strictly higher than any other equilibrium.

The equilibrium refinement thus leads to a unique efficient NE $(\mathbf{W^*, R^*(W^*)})$. Next we study how to approach the refined equilibrium. We start by establishing the following lemma that bounds $W^*$ and leads to more efficient search.

\begin{lemma}
Let $G_i^{max}\triangleq \max_{R_i}G_i(R_i)$. Given any strategy profile $\mathbf{W_0}$, denote $x^*$ the root of the equation  $\frac{(1-x)^{n-1}x}{1-(1-x)^{n}}=\frac{\widehat{S}_i(\mathbf{W_0})}{G_i^{max}}$, it holds that $\widehat{S}_i(\mathbf{W})\le \widehat{S}_i(\mathbf{W_0})$ for $W\ge \frac{2}{x^*}$.
\label{lemma:ne_search}
\end{lemma}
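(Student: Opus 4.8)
The goal is to exhibit an explicit upper bound on the CW value beyond which player $i$'s utility cannot exceed its utility at the reference profile $\mathbf{W_0}$. The natural strategy is to upper-bound $\widehat{S}_i(\mathbf{W})$ by a clean closed-form function of $\tau_i$ alone, then invert that bound. I would start from the effective-throughput expression $S_i=\frac{(1-\rho_i)\prod_{j\ne i}\rho_j(1-e_i(R_i))R_i}{1-\prod_{j\in{\cal N}}\rho_j}$, write $1-\rho_i=\tau_i$, and replace the numerator's $(1-e_i(R_i))R_i=G_i(R_i)$ by its maximum $G_i^{max}$ (Lemma~\ref{lemma:channel_error_rate} guarantees $G_i$ attains a finite maximum). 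This immediately gives
\begin{eqnarray*}
\widehat{S}_i(\mathbf{W})\le G_i^{max}\frac{\tau_i\prod_{j\ne i}\rho_j}{1-\prod_{j\in{\cal N}}\rho_j}.
\end{eqnarray*}

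\textbf{Reducing to the symmetric form.} The key simplification comes from the adapted TFT framework: at any profile $\mathbf{W}$ all users share the same CW, and I would argue (invoking the non-atomic assumption and the symmetry that drives Theorem~\ref{theorem:ne_sym_rate}) that the relevant worst case is the symmetric one where $\tau_j=\tau_i\triangleq x$ and hence $\rho_j=1-x$ for all $j$. Substituting $\prod_{j\ne i}\rho_j=(1-x)^{n-1}$ and $\prod_{j\in{\cal N}}\rho_j=(1-x)^n$ turns the bound into
\begin{eqnarray*}
\widehat{S}_i(\mathbf{W})\le G_i^{max}\frac{(1-x)^{n-1}x}{1-(1-x)^{n}},
\end{eqnarray*}
which is exactly $G_i^{max}$ times the function appearing in the statement. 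The next step is to observe that this function of $x$ is monotonically increasing in $x\in(0,1)$ (I would verify this by differentiating the ratio, or more simply by noting that throughput is monotone in the single-node transmission probability in this regime), so that larger $\tau_i$ yields a larger bound and smaller $\tau_i$ a smaller one.

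\textbf{Inverting the bound.} Now I would connect $\tau_i$ to $W$. From~\eqref{eq:tau} we have the elementary estimate $\tau_i\le \frac{2}{W+1}<\frac{2}{W}$, since every term $q_iW_i\sum_l(2q_i)^l$ in the denominator is nonnegative. Hence $W\ge \frac{2}{x^*}$ forces $\tau_i\le \frac{2}{W}\le x^*$. By the monotonicity just established, $\tau_i\le x^*$ makes the right-hand side of the bound at most $G_i^{max}\cdot\frac{(1-x^*)^{n-1}x^*}{1-(1-x^*)^{n}}$, and by the defining equation of $x^*$ this equals exactly $\widehat{S}_i(\mathbf{W_0})$. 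Chaining the two inequalities yields $\widehat{S}_i(\mathbf{W})\le \widehat{S}_i(\mathbf{W_0})$ for all $W\ge \frac{2}{x^*}$, as claimed.

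\textbf{The main obstacle.} The routine calculus (the $G_i^{max}$ replacement and the $\tau_i<2/W$ estimate) is straightforward; the delicate point is justifying the reduction to the symmetric expression $\frac{(1-x)^{n-1}x}{1-(1-x)^n}$. Because the bound in the statement is written purely in terms of a single variable $x$, I must argue that evaluating the throughput bound at the common transmission probability $x=\tau_i$ (with all other $\rho_j$ set to $(1-x)$) is legitimate. Under the adapted TFT framework the CW values coincide, and under the non-atomic assumption the equilibrium is symmetric, so all $\tau_j$ agree; establishing this cleanly—and confirming that the resulting single-variable function is genuinely monotone in $x$ so that the inversion via $x^*$ is well-defined and unique—is where the real care is needed.
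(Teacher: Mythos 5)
Your overall skeleton matches the paper's proof: bound $(1-e_i(R_i))R_i$ by $G_i^{max}$, reduce the remaining factor to the single-variable function $T(x)\triangleq\frac{(1-x)^{n-1}x}{1-(1-x)^{n}}$ evaluated at the common transmission probability, and use $\tau_i\le \frac{2}{W}$ to connect $W$ with $x^*$. The fatal flaw is your monotonicity claim: $T(x)$ is \emph{decreasing} on $(0,1)$, not increasing. Substituting $u=1-x$ and using $1-u^{n}=(1-u)(1+u+\cdots+u^{n-1})$ gives
\begin{equation*}
T(x)=\frac{u^{n-1}}{1+u+\cdots+u^{n-1}}=\frac{1}{1+u^{-1}+\cdots+u^{-(n-1)}},
\end{equation*}
which increases in $u$, i.e.\ decreases in $x$ (for $n=2$, $T(x)=\frac{1-x}{2-x}$ has derivative $-1/(2-x)^{2}<0$); the paper's own proof records exactly this decreasing monotonicity. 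With the correct direction your inversion step collapses: $W\ge \frac{2}{x^*}$ forces $\tau_i\le x^*$, hence $T(\tau_i)\ge T(x^*)$, so all you obtain is $\widehat{S}_i(\mathbf{W})\le G_i^{max}T(\tau_i)$ with a right-hand side that is \emph{at least} $\widehat{S}_i(\mathbf{W_0})$ --- no conclusion follows. Intuitively this is unavoidable: $T(\tau)$ is the fraction of busy slots won by player $i$, which improves as $\tau$ shrinks (fewer collisions, limiting value $1/n$), so an upper bound of the form $G_i^{max}T(\tau)$ grows, rather than shrinks, as $W$ increases; the true degradation at large $W$ comes from idle time, which the approximated throughput expression you start from has already discarded.

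A secondary issue is the reduction you yourself flag as the ``main obstacle'': equal CW values under the adapted TFT do not imply equal $\tau_j$ when the error functions $e_j(R_j)$ differ across players, so setting all $\rho_j=1-x$ is not justified in the heterogeneous case. But even granting that symmetrization, the proof fails for the monotonicity reason above, and simply flipping the stated direction of monotonicity does not repair the argument --- the chain of inequalities would then need to be rebuilt on entirely different grounds.
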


\begin{proof}
Recall the formula of $S_i$, noticing that
\begin{eqnarray*}
(1-\rho_i)\prod_{j\in{\cal N},j\ne i}\rho_j=\prod_{j\in{\cal N},j\ne i}\rho_j-\prod_{j\in{\cal N}}\rho_j\le 1-\prod_{j\in{\cal N}}\rho_j,
\end{eqnarray*}
it holds that
\begin{eqnarray*}
\widehat{S}_i(\mathbf{W})=\frac{(1-\rho_i)\prod_{j\in{\cal N},j\ne i}\rho_jG_i(R_i^*(\mathbf{W}))}{1-\prod_{j\in{\cal N}}\rho_j} \\
\le G_i(R_i^*(\mathbf{W}))\le G_i^{max}, \ \forall \ W\ge 0.
\end{eqnarray*}
On the other hand, it can be checked that $T(x)\triangleq \frac{(1-x)^{n-1}x}{1-(1-x)^{n}}$ is monotonously decreasing in $x$ and $T(0)=1$, $T(\infty)=0$. $T(x)=\frac{\widehat{S}_i(\mathbf{W_0})}{G_i^{max}}$ admits a unique solution $x^*$.

When all players operate on $W$, it holds that $\rho_j=1-\tau_j\ge \frac{2}{W}, \forall j\in{\cal N}$. Therefore, $W\ge \frac{2}{x^*}$, it holds that
\begin{eqnarray*}
\widehat{S}_i(\mathbf{W_0})\ge G_i^{max}T\left(\frac{2}{W}\right)\ge T(x^*)\ge \widehat{S}_i(\mathbf{W}),
\end{eqnarray*}
which completes our proof.
\end{proof}

Lemma~\ref{lemma:ne_search} implies that operating on CW larger than $\frac{2}{x^*}$ cannot be a system equilibrium.
Consequently, when searching the efficient equilibrium $\mathbf{W^*}$ derived from the refinement process, if the CW value of players is currently $W$, it suffices to search until $\min_{i\in{\cal N}} W_i$ where $W_i=\frac{2}{x^*_i}$ with $x^*_i$ being the root of $\frac{(1-x)^{n-1}x}{1-(1-x)^{n}}=\frac{\widehat{S}_i(W)}{G_i^{max}}$. Based on this result, we develop a distributed algorithm (Algorithm~1) to cooperatively search and converge to $\mathbf{W^*}$.

\begin{comment}
\begin{figure}[t]
\centering
\includegraphics[width=8cm, angle=0]{./algo.eps}
%\vspace{-0.3cm}
%\caption{System equilibrium vs. global optimum}
%\label{fig:ne_sym}
\end{figure}
\end{comment}

The core idea of the first loop in the algorithm is to have a coordinator to synchronize the CW values of players. The coordinator can be any player. Then each player $i$ can construct $\widehat{{\cal W}}_i$ based on Definition~\ref{def:w_i}. The construction process terminates at $\min_{i\in{\cal N}} W_i^{max}$ since from Lemma~\ref{lemma:ne_search}, it suffices to parse CW values until $\min_{i\in{\cal N}} W_i^{max}$, which is dynamically updated in the algorithm. In the second loop of the algorithm, by applying the adapted TFT strategy, the system converges gradually to the efficient equilibrium $W^*$.

As can be noted from Algorithm~1, it can be implemented in a distributed fashion and any player can be the coordinator to synchronize the CW values. Moreover, all players have incentive to participate the cooperative search of the efficient equilibrium. This can be shown as follows: it follows from the adapted TFT strategy that the system will operate on the same CW value for all players; hence the system will operate on an equilibrium, otherwise we can find a player who can increase its utility by decreasing its CW value; thus the CW value is monotonously decreasing and will reach an equilibrium because the smallest CW value $\mathbf{0}$ is an equilibrium; since $\mathbf{W^*}$ is the most efficient equilibrium among the potentially multiple system equilibria, any selfish but rational player has incentive to operate on $W^*$ by participating the cooperative search.

% In the algorithm, each player $i$ need to compute $G_i^{max}$ which further requires channel information. In practice, such information is not necessarily available. In such context, $i$ can set $G_i^{max}$ as the loose upper bound of $G_i(R_i)R_i$. The price of choosing a looser upper bound leads to larger $W_i^{max}$ which means more searching effort. This side effect is usually acceptable in practice.

A note on the robustness of the algorithm. We observe via numerical experiments (detailed in Section~\ref{sec:simu}) that by deviating from $W^*$ (e.g., due to variation in measurement or a large step size instead of $1$) and by varying the number of players, the user can still achieve reasonably efficient point, with at least 80\% of the throughput achieved at the efficient equilibrium. This observation demonstrates the robustness of the algorithm in dynamic scenarios with frequent arrival and departure. Such robust feature can significantly facilitate the practical implementation of the algorithm.

%\begin{comment}
\begin{algorithm}[htbp]
\caption{Searching the efficient equilibrium $\mathbf{W^*}$: executed at each player $i$}
\label{algo:efficient_ne_search}
\begin{algorithmic}[1]
\STATE \textbf{Initialization:} Set $W_i=0$, $\widehat{{\cal W}}_i=\{0\}$, $S_i^{max}=0$, $W_i^{max}=\infty$ and set $\epsilon$ to a small value
\STATE \textbf{Start:} Any player $l$ broadcasts a message \texttt{StartSearch} to start the searching process
\LOOP
\IF{a message \texttt{SearchStop} received}
    \STATE Quit the loop by going to line 25
\ENDIF
\IF{a message \texttt{StartSearch} or \texttt{IncreaseW} received \textbf{or} a message \texttt{StartSearch} or \texttt{IncreaseW} sent in case $i=l$}
    \STATE Increase $W_i$ by $1$ and wait a short period of time for others to increase their CW values
    \STATE Count the number of acknowledged packets $n_s$ during a period $t_m$ and measure the average effective throughput $S_i(W_i)=\frac{n_sR_iT}{t_m}$
    \IF{$S_i>S_i^{max}$}
        \STATE Set $S_i^{max}=S_i$, $W_i^*=W_i$, $W_i^{max}=\frac{2}{x^*}$ where $x^*$ is the root of the equation $\frac{(1-x)^{n-1}x}{1-(1-x)^{n}}=\frac{S_i}{G_i^{max}}$
        \STATE Add $W_i$ into $\widehat{{\cal W}}_i$
    \ENDIF
    \IF{$W_i\ge W_i^{max}$}
        \STATE Set $W_i=W_i^*$
        \STATE Send a message \texttt{SearchStop} containing $W_i^*$
        \STATE Quit the loop by going to line 25
    \ELSE
        \IF{$i=l$}
            \STATE Wait sufficient long time for the other players to finish the above operations
            \STATE Send a message \texttt{IncreaseW}
        \ENDIF
    \STATE Go to \textbf{loop}
    \ENDIF
\ENDIF
\ENDLOOP
\REPEAT
\STATE Measure CW values of others during a period of time
\IF{$\min_j W_j< W_i-\epsilon$}
    \STATE Set $W_i$ to the largest element in $\widehat{{\cal W}}_i$ that is smaller than $W_i$
\ENDIF
\UNTIL{$\min_j W_j\ge W_i-\epsilon$}
\end{algorithmic}
\end{algorithm}
%\end{comment}

\subsection{Efficiency of System Equilibrium}

In this subsection, we investigate the efficiency of the system equilibrium $(\mathbf{W^*, R^*(W^*)})$ derived previously. To make our analysis tractable, we focus here on the symmetrical scenario where the channel conditions are the same for all players. Nevertheless, we observe similar results for generic scenarios, as detailed in the simulation results in Section~\ref{sec:simu}.

In the symmetrical scenario, it follows from the analysis of the previous two subsections that $W^*=W_i^*, \forall i\in{\cal N}$. This means that the refined equilibrium $\mathbf{W^*}$ is composed of the maximizers of all the individual utility function $\widehat{S}_i$. Let $\mathbf{(W^{opt}, R^{opt})}$ denote the system optimum, since $\mathbf{W^*}$ is the refined equilibrium of $G_M$, we have
$$\widehat{S}_i(\mathbf{W^*, R^*(W^*)})\ge \widehat{S}_i(\mathbf{W^{opt}, R^*(W^{opt})}).$$
Noticing the symmetry of players, we have
\begin{eqnarray*}
\frac{\sum_{i\in{\cal N}}S_i(\mathbf{W^{opt}, R^{opt}})}{\sum_{i\in{\cal N}}\widehat{S}_i(\mathbf{W^*, R^*(W^*)})} = \frac{S_i(\mathbf{W^{opt}, R^{opt}})}{\widehat{S}_i(\mathbf{W^*, R^*(W^*)})} \\
\le \frac{S_i(\mathbf{W^{opt}, R^{opt}})}{\widehat{S}_i(\mathbf{W^{opt}, R^*(W^{opt})})},
\end{eqnarray*}
which is the PoA of $G_R(\mathbf{W^{opt}})$. This result readily indicates that the system level PoA equals to the PoA of the lower-level rate adaptation game at $\mathbf{W^{opt}}$. Since we have demonstrated that the PoA of the lower-level rate adaptation game is very close to $1$, we hence have that the refined equilibrium of the global game is also very close to system optimum from a social perspective.

\section{NUMERICAL EXPERIMENTS}
\label{sec:simu}

In this section, we present a suite of numerical experiments to evaluate the proposed adapted TFT strategy by demonstrating and validating some of the theoretical results of the joint medium access and rate adaptation game studied in the paper, especially for the cases that we are not able to investigate analytically. Specifically, we focus on several scenarios indicative of the typical interactions among the players in the game, starting with the symmetrical case with homogeneous players, continuing with a more sophisticated scenario with two classes of homogenous players, and finally considering the generic asymmetrical scenario with heterogeneous players randomly parameterized. In particular, we investigate the structure of equilibrium of these scenarios and compare it to the system optimum.

\subsection{Symmetrical Scenario}

We start by analyzing the symmetrical scenario with homogeneous players. To this end, we simulate a standard 802.11 WLAN of $N$ homogeneous users operating on $64$ QAM with $\alpha=1$, $SNR=15$dB (please refer to~\eqref{eq:qam}). We set a large rate range with $R_{min,i}=1$Mbps and $R_{max,i}=100$Mbps. Table~\ref{table;ne_sym} and Figure~\ref{fig:ne_sym} compare the efficient system equilibrium found by applying Algorithm~1 to the global optimum. By comparing the players' equilibrium strategy and the optimal strategy, we observe that players are slightly more aggressive by using smaller CW value and lower data rate leading to less transmission error. However, as suggested by the simulation results, the efficiency loss of the system due to players' selfishness is very small, which demonstrates that the proposed adapted TFT framework can bring about a reasonably efficient equilibrium with only a small system utility loss. This result is especially meaningful given the result on the performance anormaly of multi-rate 802.11 WLAN where users tend to operate on low data rate~\cite{Heusse03}.

Figure~\ref{fig:around_ne} further studies the robustness of Algorithm~1 searching the efficient system equilibrium. To this end, we focus on the case $N=12$ where the efficient system equilibrium is $W^*=220$ and we study the case where the system does not operate exactly on the equilibrium: (1) the system operates on the CW around $W^*$ varying from $100$ to $500$, this may be due to the choice of a large step size in the algorithm; (2) there are users departing and /or entering without rerunning the algorithm, this is to simulate the case where the algorithm is not run too frequently, we thus study the system efficiency by varying the number of players. We report from the result of Figure~\ref{fig:around_ne} that even in the case where the system cannot operate on the exact equilibrium, the system can still achieve at least $85$\% the optimal utility, which demonstrates the robustness of Algorithm~1. This robust and
tolerant feature can significantly facilitate the implementation of the algorithm.

\begin{table}[ht]  \centering
\begin{tabular}{|l|llllll|}
\hline
$N$ & 2 & 7 & 12 & 17 & 22 & 27 \\
\hline
$W^{NE}$ & 45 & 110 & 220 & 345 & 510 & 710 \\
$W^{opt}$ & 55 & 140 & 265 & 430 & 655 & 905 \\
\hline
\end{tabular}
%\vspace{-0.3cm}
\caption{System equilibrium vs. global optimum}
\label{table;ne_sym}
\end{table}

\begin{figure}[htbp]
\centering
\includegraphics[width=4.5cm, angle=270]{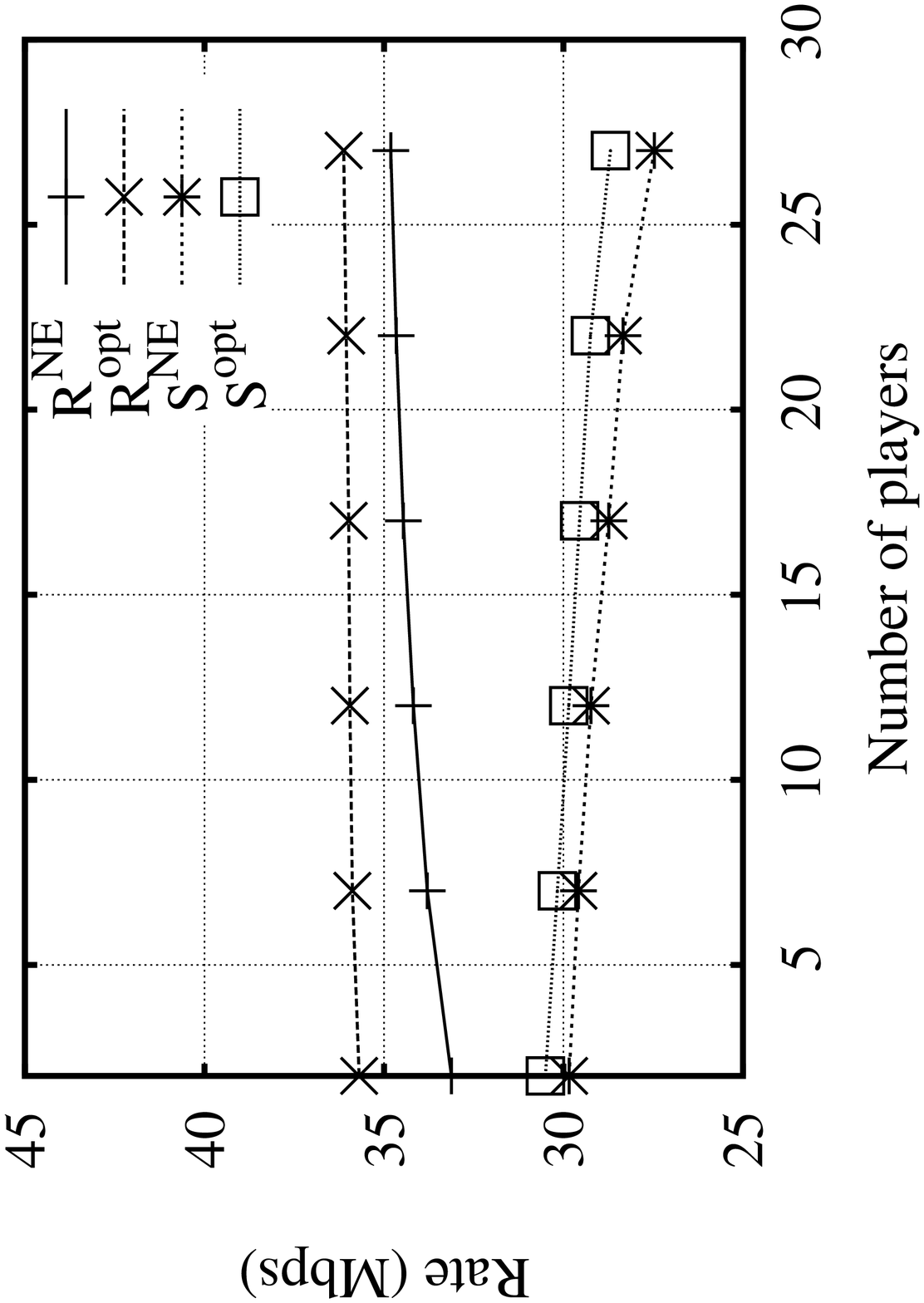}
%\vspace{-0.3cm}
\caption{System equilibrium vs. global optimum}
\label{fig:ne_sym}
\end{figure}

\begin{figure}[htbp]
\centering
\includegraphics[width=4.5cm, angle=270]{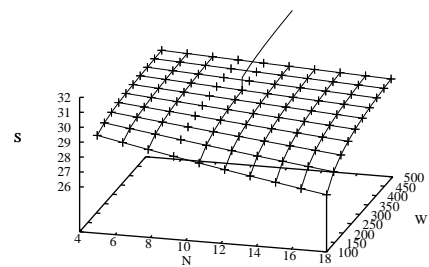}
%\vspace{-0.8cm}
\caption{Throughput around equilibrium: robustness analysis of Algorithm~1}
\label{fig:around_ne}
\end{figure}

\subsection{Scenario with two Classes of Players}

We proceed to study a more sophisticated scenario composed of two classes, namely L and H, of homogeneous players whose SNR are $SNR_L=10$ and $SNR_H=20$, respectively. The other parameters are set to the same values as before. The results, shown in Figure~\ref{fig:ne_hyb} and Table~\ref{table:ne_hyb}, demonstrate again that although slightly more aggressive due to individual selfishness, the efficient system equilibrium is very close to the system optimum from a social perspective.

\begin{figure}[htbp]
\centering
\includegraphics[width=4.5cm, angle=270]{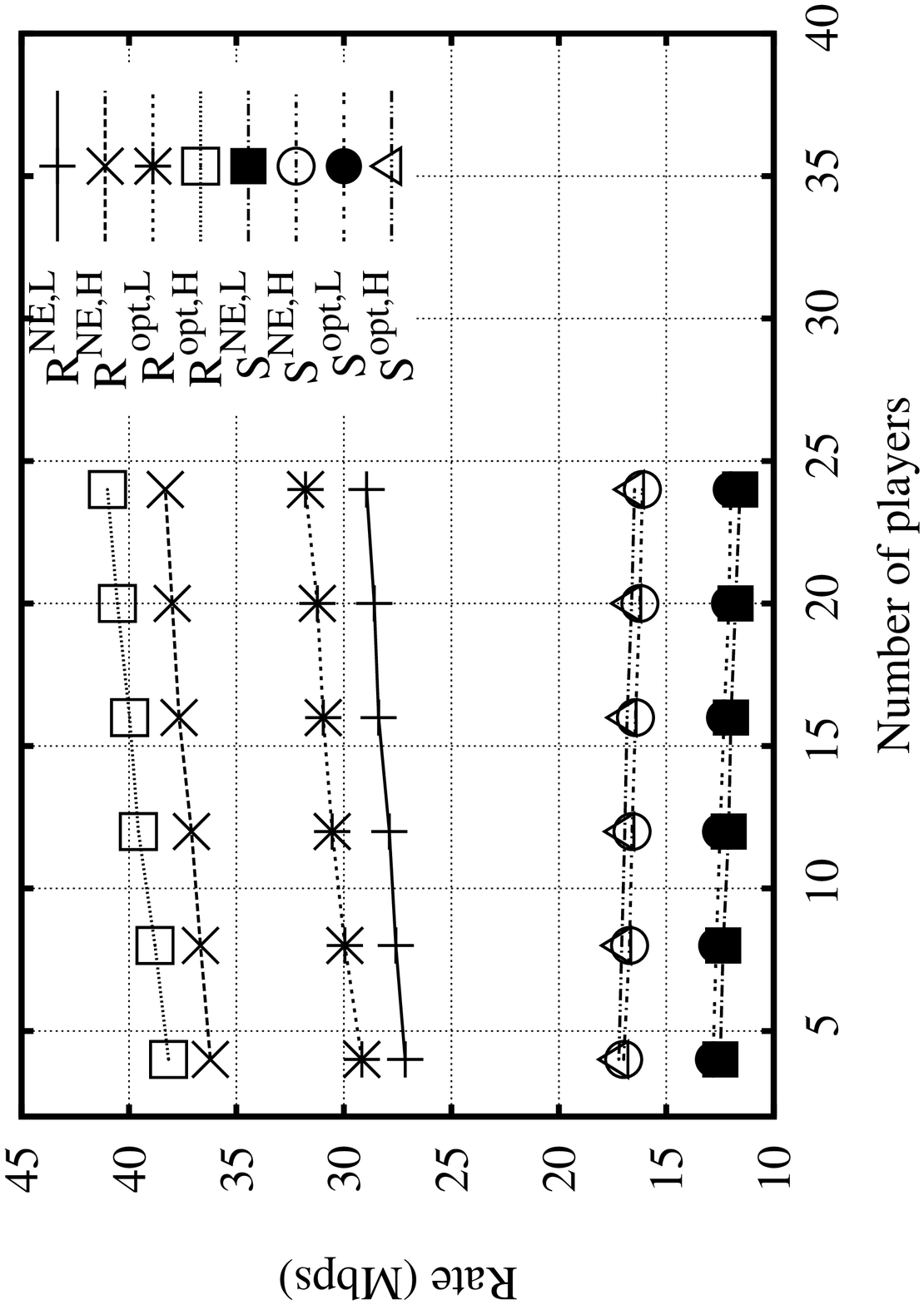}
%\vspace{-0.3cm}
\caption{System equilibrium vs. global optimum}
\label{fig:ne_hyb}
\end{figure}

\begin{table}[ht]  \centering
\begin{tabular}{|l|llllll|}
\hline
$N$ & 4 & 8 & 12 & 16 & 20 & 24 \\
\hline
$W^{NE}$ & 70 & 130 & 190 & 335 & 550 & 810 \\
$W^{opt}$ & 80 & 160 & 285 & 470 & 665 & 900 \\
\hline
\end{tabular}
%\vspace{-0.3cm}
\caption{System equilibrium vs. global optimum}
\label{table:ne_hyb}
\end{table}

\subsection{Asymmetrical Scenario}

We finally consider various heterogeneous scenarios with asymmetrical players. More specifically, we simulate a WLAN of $N$ players, each with a SNR randomly chosen from $[5,25]$. For each $N$, we run $100$ simulations with random SNR and plot the average Price of Anarchy in Figure~\ref{fig:poa_asym}. In the simulation, we make the following observations: (1) the convergence to a system equilibrium is always achieved at the rate adaptation level under a given CW value, which confirms our theoretical results; (2) the PoA remains small, with the average value bounded by $1.12$ and not exceeding $1.25$ in any simulated case, which demonstrates  the efficiency of the proposed adapted TFT strategy in generic scenarios; (3) the PoA is decreasing in the number of players which shows the good scalability of the proposed mechanism.

\begin{figure}[htbp]
\centering
\includegraphics[width=4.5cm, angle=270]{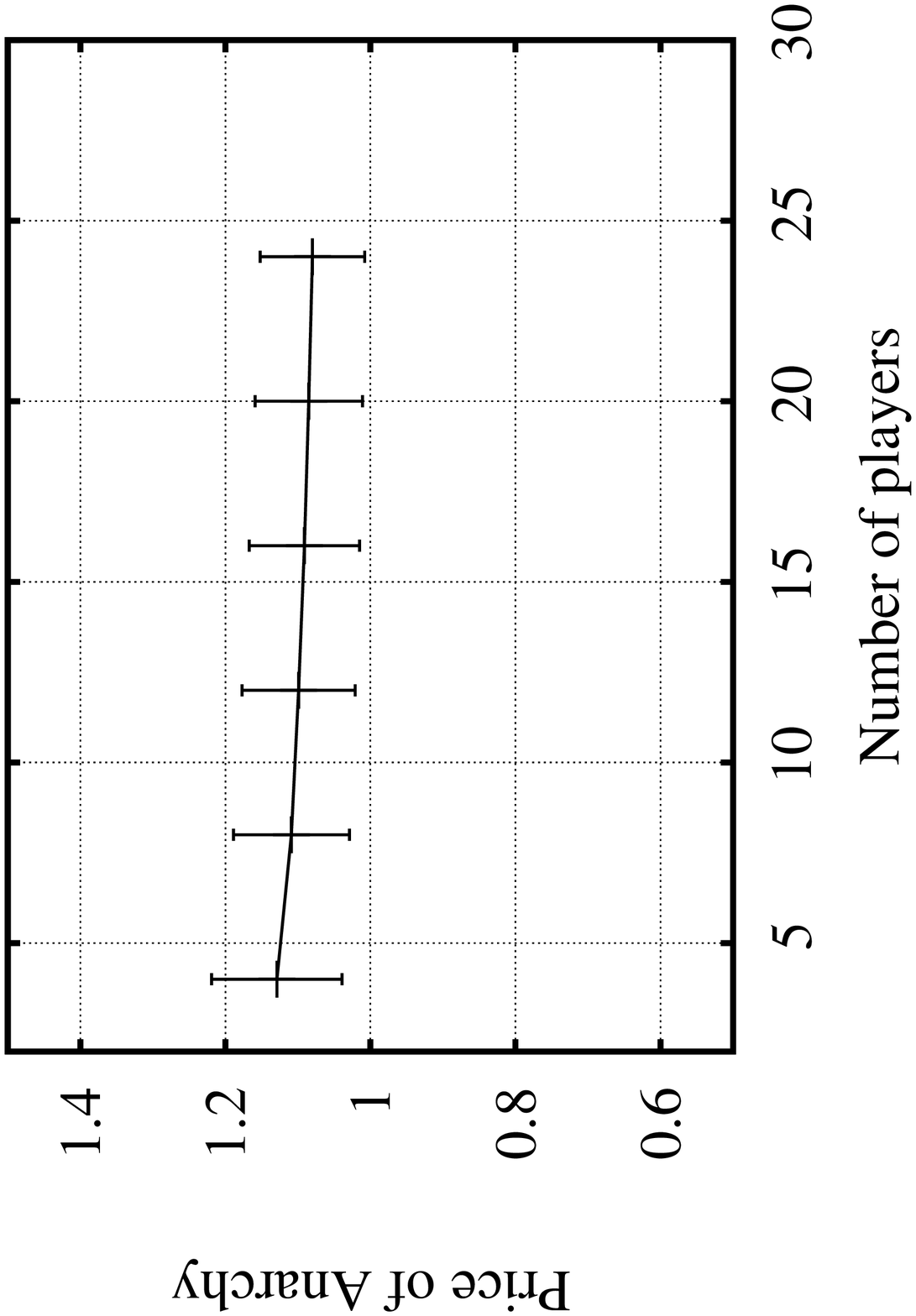}
%\vspace{-0.3cm}
\caption{System equilibrium vs. global optimum}
\label{fig:poa_asym}
\end{figure}

\section{Conclusion}
\label{sec:conclu}

In this paper, we have investigated the joint rate adaptation and medium control in WLANs from a non-cooperative game theoretic perspective. We have developed an adapted TFT strategy to orient the network to an efficient equilibrium where users can jointly configure their CW size and data rate selfishly. We have formulated the interactions among selfish users under the adapted TFT framework as a non-cooperative joint medium access and rate adaptation game. By analyzing the structural properties of the game, we have provided insights on the interaction between rate adaptation and 802.11 medium access in competitive setting. We have shown that the game has multiple equilibria, which, after the equilibrium refinement process that we develop, reduce to a unique efficient one. We have developed a distributed algorithm to achieve this equilibrium and demonstrated that the equilibrium achieves the performance very close to the system optimum from a social perspective.

\bibliographystyle{unsrt}
%\bibliography{reference}

{%\fontsize{6}{6}\selectfont
\bibliography{reference}}

\end{document}